\newtheorem{theorem}{Theorem}
\newtheorem{lemma}[theorem]{Lemma}
\newtheorem{corollary}[theorem]{Corollary}
\newtheorem{definition}[theorem]{Definition}
\newtheorem{remarque}[theorem]{Remark}
\newcommand{\bbbt}{\mathbb{T}}
\newcommand{\bbbn}{\mathbb{N}}
\newcommand{\ce}{\colonequals}
\newcommand{\cce}{\coloncolonequals}
\newcommand{\B}[2]{B_{#1}{#2}}
\renewcommand{\H}[2]{H_{#1}{#2}}
\newcommand{\CH}[2]{C^{\Diamond H}_{#1}{#2}}
\newcommand{\agents}{\mathcal{A}}
\newcommand{\globalstates}{\mathscr{G}}
\newcommand{\correct}[1]{correct_{#1}}
\newcommand{\trueoccurred}[2][]{\overline{\mathit{occurred}}_{#1}(#2)}
\newcommand{\eventually}[1]{\lozenge{#1}}
\newcommand{\always}[1]{\square{#1}}
\newcommand{\kstruct}[3]{(\Kstruct{#1},{#2},{#3})}
\newcommand{\START}{\operatorname{START}}
\newcommand{\FIRE}{\operatorname{FIRE}}
\newcommand{\Y}[1]{Y{#1}}
\newcommand{\truefire}[1][]{\overline{fire}_{#1}}
\newcommand{\truestart}[1][]{\overline{start}_{#1}}
\newcommand{\envprotocol}[1]{P_{\epsilon}\ifstrempty{#1}{}{\left(#1\right)}}
\newcommand{\agprotocol}[2]{{P_{#1}\ifstrempty{#2}{}{\left(#2\right)}}}
\newcommand{\points}{{\agents\times\bbbt}}
\newcommand{\run}[3][]{r#1_{#2}\left(#3\right)}
\newcommand{\System}[1][]{R^{#1}}
\newcommand{\system}[1]{{R^{#1}}}
\newcommand{\Kstruct}[1]{I^{#1}}
\newcommand{\K}[2]{K_{#1}\ifstrempty{#2}{}{#2}}
\newcommand{\intsys}{I}
\newcommand{\prop}{\mathit{Prop}}
\title{Fire!\thanks{Funded by the Austrian Science Fund (FWF) project ByzDEL P33600.}}
\author{Krisztina Fruzsa\thanks{PhD student in the FWF doctoral program LogiCS (W1255).} \qquad\qquad Roman Kuznets \qquad\qquad Ulrich Schmid
\institute{TU Wien}
\email{$\{$kfruzsa,rkuznets,s$\}$@ecs.tuwien.ac.at}
}
\begin{document}
\maketitle

\begin{abstract}
In this paper, we provide an epistemic analysis of a simple variant of 
the fundamental consistent broadcasting primitive for byzantine fault-tolerant 
asynchronous distributed systems. Our Firing Rebels with Relay~(FRR) primitive enables
agents with a local preference for acting/not acting to trigger an action~(FIRE) at all correct agents, in an all-or-nothing fashion. By using the
epistemic reasoning framework for byzantine multi-agent systems introduced
in our TARK'19 paper, we develop the necessary and 
sufficient state of knowledge that needs to be acquired by the agents in order
to FIRE. It involves eventual common hope (a modality related to belief),
which we show to be attained already by achieving eventual mutual hope
in the case of~FRR. We also identify subtle variations of the necessary and 
sufficient state of knowledge for~FRR for different assumptions on the 
local preferences.
\end{abstract}

\section{Motivation and Background}
\label{sec:intro}

In their PODC'18 paper ``Silence''~\cite{GM18:PODC}, Goren and Moses introduced and epistemically analyzed
\emph{silent choirs} as a fundamental primitive for message-optimal protocols in 
synchronous fault-tolerant distributed systems where computing nodes (agents\footnote{Since distributed systems are just one instance of multi-agent systems, we will use the term ``agent'' instead of ``process.''}) can crash.
In synchronous systems, where one can time-out messages, it is well-known~\cite{Lam78} 
that an agent can convey information also by \emph{not} sending some message. In a system
where the sender may also crash, however, the receiver cannot infer this information 
from not receiving the message. Still, if only up to~$f$ of the $n > f$~agents in a 
system may crash, a silent choir of $f+1$~agents that convey identical information 
suffices: at least one agent in the choir must be correct, so its silence can 
be relied on.

Whereas silent choirs also work  in systems where the faulty agents may behave 
arbitrarily (byzantine~\cite{lamport1982byzantine}),
the problem of not conveying information faithfully now also plagues messages that
\emph{are} sent, as they could originate from a faulty sender or forwarding agent.
In this paper, we will introduce and epistemically analyze a fundamental primitive
\emph{Firing Rebels with Relay}~(FRR), which nicely captures exactly these issues. It is
a simplified version of the \emph{consistent broadcasting} primitive introduced by
Srikanth and Toueg in~\cite{ST87}, which has been used as a pivotal building block 
in distributed algorithms for byzantine fault-tolerant clock synchronization~\cite{DFPS14:JCSS,FS12:DC,RS11:TCS,ST87,WS09:DC} and synchronous consensus~\cite{ST87:abc}, for example.\looseness=-1

Informally, FRR~requires that \emph{every} correct agent perform an action called $\FIRE$, in an 
all-or-none fashion (though not necessarily simultaneously), and only if at least one correct agent locally observed a trigger event called $\START$. Note that we have replaced the need to broadcast explicit
information by just triggering an action, which makes~FRR essentially a non-synchronous variant of 
the Firing Squad problem~\cite{BL87}, hence its name.
In crash-prone systems, FRR~is trivial to solve, even for large~$f$: Indeed, every agent who
observes $\START$ or receives a notification message (for the first time) just invokes 
$\FIRE$ and sends a notification message to everyone. This guarantees that if a single 
correct agent observes $\START$, every correct agent will invoke $\FIRE$ 
(agents that crash during the run may or may not issue $\FIRE$ here). Observe that
this solution involves a trivial silent choir, namely, when no agent observes $\START$.\looseness=-1

In the presence of byzantine agents, however, this solution does not work, as faulty agents 
may send a notification without having observed anything. A correct solution for~FRR must, hence, prevent the faulty agents from triggering $\FIRE$ at any correct agent.
In this paper, we will establish the necessary and sufficient state of knowledge for correctly solving~FRR 
in our epistemic reasoning framework for byzantine multi-agent systems~\cite{KPSF19:TARK,KPSF19:FroCos,PKS19:TR}. 
At least since the ground-breaking work by Halpern and Moses~\cite{HM90}, the knowledge-based approach~\cite{bookof4} has been known as a powerful tool for analyzing distributed systems. In a nutshell, it uses epistemic logic~\cite{Hin62} to reason about knowledge and belief in distributed systems. As agents take actions (e.g.,~$\FIRE$) based on the accumulated local knowledge, reasoning about the latter is useful both for protocol design and impossibility proofs.

In the \emph{runs-and-systems} framework for reasoning about multi-agent systems~\cite{bookof4,HM90}, the set of all possible runs~$r$ (executions) of a system~$I$ determines the Kripke model, formed by pairs~$(r,t)$ of a run~$r\in I$ and time~$t\in\mathbb{N}$ representing global states~$r(t)$. 
Note that time is modeled as discrete for simplicity, without necessarily being available to the agents. 
Two pairs~$(r,t)$~and~$(r',t')$ are indistinguishable for agent~$i$ if{f} $i$~has the same local state in both global states represented by those points, formally, if $r_i(t)=r_i'(t')$. 
A modal \emph{knowledge operator}~$K_i$ is used to capture that agent~$i$ knows some fact~$\varphi$ in run~$r \in I$ at time~$t\in\mathbb{N}$. 
Formally, $(I,r,t) \models K_i\varphi$ if{f} for every~$r'\in I$ and for every~$t'$ with $r_i(t)=r_i'(t')$ it holds that $(I,r',t')~\models~\varphi$. 
Note that $\varphi$~can be a formula containing arbitrary atomic propositions like $\trueoccurred{e}$ (event~$e$ occurred) or $\correct{i}$ ($i$~did not fail yet), as well as other knowledge operators and temporal modalities like $\lozenge$~(eventually) and $\Box$~(always), combined by standard logical operators~$\neg, \wedge, \vee$,~and~$\to$.  
For example, $(I,r,t)~\models~\eventually{K_i\trueoccurred{e}}$ states that there is some time $t'\geq t$ when $i$~knows that event~$e$ occurred. 
Important additional modalities for a group~$G$ of agents are \emph{mutual knowledge} $E_G\varphi \ce \bigwedge_{i\in G} K_i\varphi$ and \emph{common knowledge}~$C_G\varphi$ that can be informally expressed as an infinite conjunction $C_G\varphi \equiv E_G\varphi \wedge E_G(E_G\varphi) \wedge \dots$; in other words, this means that every agent in~$G$ knows~$\varphi$, and every agent in~$G$ knows that every agent in~$G$ knows~$\varphi$, and so on.\looseness=-1

Actions performed by the agents when executing a protocol take place when they have accumulated some specific epistemic knowledge. According to the pivotal \emph{Knowledge of Preconditions Principle}~\cite{Mos15TARK}, it is universally true that if $\varphi$~is a necessary condition for agent~$i$ to take a certain action then $i$~may act only if $K_i \varphi$~is true. 
For example, in order for agent~$i$ to decide on~0 in a binary consensus algorithm, $i$~must know that some agent~$j$ has started with initial value $x_j=0$, i.e., $K_i(\exists j: x_j=0)$ must hold true. 
Showing that agents act without having attained the respective necessary knowledge is, hence, a very effective way for proving incorrectness of protocols. Conversely, optimal distributed algorithms can be designed by letting agents act as soon as all respective necessary knowledge has been established. 
Prominent examples are the protocols based on silent choirs analyzed in~\cite{GM18:PODC} and the \emph{unbeatable} consensus protocols introduced in~\cite{CGM14}, which are not just worst-case optimal but also  not strictly dominated w.r.t.~termination time by any other protocol in \emph{any} execution.

\medskip
\noindent
\textbf{Related work:} The knowledge-based approach has been used for studying several distributed computing problems in systems with uncertainty but no failures. In~\cite{ben2014beyond}, Ben-Zvi and Moses considered the simple \emph{ordered response} problem in distributed systems, where the agents had to respond to an external $\START$ event by executing a special one-shot action $\FIRE$  in a given order~$i_1,i_2,\dots$. The authors showed that, in every correct solution, agent~$i_k$ has to establish nested knowledge $K_{i_k}K_{i_{k-1}}\dots K_{i_1}\trueoccurred{\START}{}$ before it can issue $\FIRE$ and that this nested knowledge is also sufficient.  
In the conference version~\cite{BM10:DISC} of~\cite{ben2014beyond}, the authors also considered the \emph{simultaneous  response} problem where all agents had to issue $\FIRE$ at the same time. It requires the group~$G$ of firing agents to establish common knowledge $C_G\trueoccurred{\START}{}$~\cite{HM90}. This work was later extended to responses that are not simultaneous but tightly coordinated in time~\cite{BM13:ICLA,GM13:TARK}.

The knowledge-based approach has also been successfully applied to fault-tolerant synchronous distributed systems. Agents suffering from crash or omission failures have been studied in~\cite{moses1986programming,MT88}, primarily in the context of agreement problems~\cite{dwork1990knowledge,halpern2001characterization},  which require some form of common knowledge. Important ingredients here are the indexical set of correct agents and a related belief operator $B_i\varphi \ce K_i(\correct{i} \rightarrow \varphi)$~\cite{MosSho93AI}, which states that agent~$i$ knows~$\varphi$ to be true in all runs where $i$~is correct. 
This notion of ``defeasible knowledge'' also underlies a variant
of common knowledge that has been used successfully for characterizing
simultaneous distributed agreement~\cite{moses1986programming,MT88}. Closer related to
our FRR~problem is eventual distributed agreement studied in~\cite{halpern2001characterization}, where the stronger notion of continual common knowledge proved its value.
The latter needs to hold throughout a run, i.e.,~from the beginning,
which makes sense here since it is only applied to conditions on the
initial state. Continual common knowledge does not seem
readily applicable to~FRR, however, as $\START$ can occur at any time in a run.
More recent results are the already mentioned unbeatable consensus algorithms in synchronous systems with crash failures~\cite{CGM14} and the silent-choir based message-optimal protocols~\cite{GM18:PODC}.

\medskip
\noindent  
\textbf{Detailed contributions:} 
We rigorously define the FRR~problem and its weaker variant~FR, without the all-or-nothing requirement (agreement), 
in epistemic terms and identify the necessary and sufficient state of knowledge that must be established by a correct agent in order to issue $\FIRE$ in every correct solution for~FRR. 
Since FRR~involves distributed agreement, the required state of knowledge involves some form of (eventual) common knowledge of $\trueoccurred{\START}{}$. 
Interestingly, it turned out that establishing the respective eventual mutual knowledge (namely, ``eventual mutual hope'' where the hope modality is defined as $H_i\varphi \ce \correct{i} \to B_i\varphi$) already implies the required common knowledge (namely, ``eventual common hope''). 
We also identify subtle variations of the necessary and sufficient state of knowledge for~FRR for different assumptions on the occurrence of $\START$.

Whereas identifying the necessary and sufficient state of knowledge for the agents to $\FIRE$
does not immediately lead to efficient practical protocols, it is an important first step towards
this goal. Indeed, as for the
ordered response problem in~\cite{ben2014beyond}, for example, we expect this knowledge to
lead  to necessary and sufficient \emph{communication structures}, which must be present 
in every run of any correct protocol solving~FRR. 
Knowing the latter would not only enable us to decide right away whether the communication guarantees provided by some distributed system 
allow to solve~FRR, but also facilitate the design of efficient protocols.

\medskip
\noindent
\textbf{Paper organization:}
In Section~\ref{sec:model}, we introduce some minimal basic notation from our modeling framework~\cite{PKS19:TR}. In Section~\ref{sec:FRR}, we provide the detailed definition and an epistemic analysis of the FRR~problem. Some
conclusions and directions of future work in Section~\ref{sec:conclusions} complete our paper.

\section{Preliminaries}
\label{sec:model}

In this section, we outline the basic concepts and facts that are employed in our epistemic analysis of Firing Rebels with Relay~(FRR). Our analysis was actually performed within the rigorous framework (that is based on the standard runs-and-systems framework) first developed in~\cite{PKS19:TR}, which incorporates agents' ability to arbitrarily deviate from normative behavior. This framework enables one to formally express the epistemic limitations (of the agents) that the presence of possibly fully byzantine agents in the system imposes.

In particular, we proved in~\cite{KPSF19:FroCos} that asynchronous agents in a system with fully byzantine agents can never \emph{know} that a particular event took place or even that an agent performed a particular action (since the local state of a malfunctioning agent may have been corrupted, the above statement applies even to the agent's knowledge of its own actions). This rather disappointing result stems from the inability of even correct agents to exclude the possibility of the so-called \emph{brain-in-a-vat scenario}. In other words, an agent can never be sure that the events and actions recorded in its local history truly happened as recorded rather than being figments of its own malfunction.
To make matters worse, agents can also never \emph{know} that they are correct either. 
Thus, unable to rely on knowledge or their own correctness, our agents are forced to rely instead on \emph{belief} $B_i\varphi \ce K_i(\correct{i} \rightarrow \varphi)$~(cf.~\cite{MosSho93AI}). In this paper, we show that even belief is not always appropriate and needs to be replaced with a modality we called \emph{hope} defined as $H_i\varphi \ce \correct{i} \to B_i\varphi$ (for a detailed explanation see Remark~\ref{rem:hope_orig}).

Since the epistemic analysis presented in this paper is protocol-independent and does not rely on the artefacts of our modeling, we omit all details  irrelevant to the task at hand and present our findings in an epistemic language with temporal modalities that is interpreted in Kripke models generated by runs in our framework. The purpose of this section is to provide all the necessary ingredients (referring the reader to~\cite{KPSF19:TARK,KPSF19:FroCos,PKS19:TR} for full details of the said framework).

We fix a finite set $\agents=\{1,\dots,n\}$ of \emph{asynchronous agents} with \emph{perfect recall}. 
Each agent~$i\in\agents$ can perform \emph{actions} (according to its protocol), e.g.,~send \emph{messages}. One of the actions any agent can perform is $\FIRE$. Agents also witness \emph{events} (triggered by the \emph{environment}) such as message delivery. One of the events that can be observed by any agent is $\START$.
We use a discrete time model governed by a global clock with domain~$\bbbt = \bbbn$.
All events taking place after clock time~$t\in\bbbt$ and no later than~\mbox{$t+1$} 
are grouped into a \emph{round} denoted~\mbox{$t+{}$\textonehalf{}} and are treated as  
happening simultaneously.
Apart from actions, everything in the system is governed by the \emph{environment}.
Unlike the environment, agents only have limited local information, in particular, being asynchronous, do not have access to the global clock. This is achieved by allowing them not to perform actions in some rounds and allowing them, in the absence of either actions or events, to stay in the same local state for several rounds in a row. The agents have perfect recall in the sense that, once recorded in their local history, actions and events are never forgotten.

No assumptions apart from liveness are made about the communication. Messages can be lost, arbitrarily delayed, and/or delivered in the wrong order. In addition, the environment may cause at most $f$~agents to become \emph{byzantine} faulty. A byzantine faulty agent can perform any action irrespective of its protocol and ``observe'' events that did not happen. It can also have false memories about actions it has performed. At the same time, like the global clock, such malfunctions are not directly visible to an agent.\looseness=-1

Throughout the paper, horizontal bars signify phenomena that are correct. Note that   the absence of this bar should not be equated to faultiness but rather  means the absence of a claim of correctness.

Agent~$i$'s local view of the system immediately after round~\mbox{$t+{}$\textonehalf}, referred to as (\emph{process-time} or \emph{agent-time}) \emph{node}~$(i,t+1)$, is recorded in $i$'s~\emph{local state} $r_i(t+1)$, also called $i$'s~\emph{local history}. 
 A \emph{run}~$r$ is a sequence of \emph{global states} $r(t) = \bigl(r_{\epsilon}(t),r_1(t),\dots,r_n(t)\bigr)$ of the whole system consisting of the \emph{state~$r_\epsilon(t)$ of the environment} and local states~$r_i(t)$ of every agent. Unlike local states, the global state of the system necessarily updates every round to include all actions and events that happened (even the empty set thereof is faithfully recorded and modifies the global state).
  The set of all global states is denoted~$\globalstates$.
 
What happens in each round is determined by nondeterministic protocols~$\agprotocol{i}{}$  of the agents, the nondeterministic protocol~$\envprotocol{}$ of the environment, and chance, the latter implemented as the \emph{adversary} part of the environment (the exact technical details are not important for this paper). 
 
In our epistemic analysis, we consider pairs~$(r,t)$  of a run~$r$  and time~$t$. A \emph{valuation function}~$\pi $ determines  whether an atomic proposition from~$\prop$ is true in run~$r$ at time~$t$. The determination is arbitrary except for a small set of \emph{designated atomic propositions} 
whose truth value at~$(r,t)$ is fully determined by the state of the system. More specifically, for~$i\in\agents$~and~$t\in \bbbt$,

$\correct{i}$ is true at~$(r,t)$ if{f} no faulty event happened to~$i$  by time~$t$;\looseness=-1

$\trueoccurred[i]{o}$ is true at~$(r,t)$ if{f} $i$'s~local history~$r_i(t)$ contains an \emph{accurate} record of action/event~$o$ occurring (for example, in this paper we use action $o=\FIRE$ and event $o=\START$); 
   
$\trueoccurred{o} \ce \bigvee\nolimits_{i \in \agents}\trueoccurred[i]{o}$.

An \emph{interpreted system} is a pair $I = (R, \pi)$ where $R$~is the set of considered runs. 
The  language is $\varphi \cce p \mid \lnot \varphi \mid (\varphi \land \varphi) \mid K_i \varphi \mid \lozenge \varphi \mid Y \varphi$ where $p \in \prop$ and~$i\in\agents$; derived Boolean connectives are defined in the usual way; $\Box \varphi \ce \lnot \Diamond \lnot \varphi$.  
Truth  for these \emph{formulas} is defined in the standard way, in particular, for a run~$r \in R$, time~$t \in \bbbt$, atomic proposition $p \in \prop$, agent~$i \in \agents$, and formula~$\varphi$, we have $(I,r,t) \models p$ if{f} $(r,t) \in \pi(p)$, and $(I,r,t) \models K_i \varphi$ if{f} $(I,r',t') \models \varphi$ for any~$r'\in R$~and $t' \in \bbbt$~such that $r_i(t) = r'_i(t')$,  and $(I,r,t) \models \lozenge \varphi$ if{f} $(I,r,t') \models \varphi$ for some~$t' \ge t$, and $(I,r,t) \models Y\varphi$ if{f} $t>0$ and $(I,r,t-1) \models \varphi$ . 
A formula~$\varphi$ is valid in~$I$, written $I \models \varphi$, if{f} $(I,r,t) \models \varphi$ for all $r \in R$ and $t \in \bbbt$.\looseness=-1 

\section{The Firing Rebels Problem}
\label{sec:FRR}

In~\cite{ST87}, Srikanth and Toueg introduced the consistent broadcasting primitive, which proved its value in several different contexts, ranging from low-level byzantine fault-tolerant tick generation in various system models~\cite{DFPS14:JCSS,FS12:DC,RS11:TCS,WS09:DC} to classic clock synchronization~\cite{ST87} to Byzantine Agreement~\cite{ST87:abc}. As argued already in Section~\ref{sec:intro}, it gave rise to our \emph{Firing Rebels with Relay} problem~FRR~\cite{Fim18:master}, which can be seen as a natural generalization of a \emph{silent choir} in byzantine fault-tolerant systems~\cite{GM18:PODC}.
As an important building block for designing byzantine fault-tolerant systems, it is therefore a natural target for a detailed epistemic analysis in our framework~\cite{PKS19:TR}.

Our Firing Rebels problems assume that every agent~$i \in \agents$ may observe an event $\START$ and may generate an action $\FIRE$ according to the following specification:

\begin{definition}[Firing Rebels with and without Relay]\label{def:FR}
A system is consistent with \emph{Firing Rebels~(FR)} for~$f\geq 0$ when all runs 
satisfy:
	\begin{itemize}
		\item[\textup{(C)}] \emph{Correctness}: If at least $2f+1$~agents learn that\/ $\START$ occured at a correct agent, all correct agents perform\/ $\FIRE$ eventually.
		\item[\textup{(U)}] \emph{Unforgeability}: If a correct agent performs\/ $\FIRE$, then\/ $\START$ occurred at a correct agent.
	\end{itemize}
Moreover, the system is consistent with \emph{Firing Rebels with Relay~(FRR)} if every run also satisfies:
	\begin{itemize}
		\item[\textup{(R)}] \emph{Relay}: If a correct agent performs\/ $\FIRE$, all correct  agents perform\/ $\FIRE$ eventually.
	\end{itemize}
\end{definition}

\begin{remarque}[Variants of Correctness]\label{rem:correctness}
A different specification for Correctness can sometimes be found in literature: ``\emph{If at least $f + 1$~reliable agents locally observed $\START$, then some reliable agent fires eventually}'' (see,~e.g.,~\cite{BL87}). 
Here, a reliable agent is one that \emph{will always} follow its protocol, which corresponds to a forever correct agent in our terminology. 
In the case of~FRR, by invoking\/~\textup{(R)}, this specification implies ``\emph{If at least $f + 1$~reliable agents locally observed $\START$, then all reliable agents fire eventually}.'' 
Relying on such a specification in asynchronous settings is problematic, however, because reliability depends on the future behavior of the system. Even complete knowledge of the global state, at a given time in a run, does not allow to identify the reliable agents whose  observations of\/ $\START$ could be relied upon.
Thus, we require $2f+1$~arbitrary (correct or faulty) agents instead. Of course,
given the limit of $f$~faulty agents per run, at least $f+1$~(not necessarily the same)~of these agents will 
remain reliable in every run.
Moreover, we relax the condition of the $2f+1$~agents locally observing\/ $\START$ to each of them learning that\/  $\START$ happened to some correct agent. This is preferable, because direct observation is only one possible way of ascertaining that\/ $\START$ occurred. For instance, if an agent has already  determined\footnote{Strictly speaking, the agent in this situation \emph{does not know} that the $f$~agents are faulty, but rather that they are faulty if it itself is not. By the same token, whenever we say ``learned,'' ``determined,'' or ``ascertained'' above, what we mean is reasoning under the assumption of its own correctness, i.e., the belief modality~$B_i$ rather than the knowledge modality~$K_i$.} who the $f$~faulty agents are, e.g.,~due to their erratic behavior in the past, then  a  confirmation of\/ $\START$ from just one other agent would be sufficient. 
\end{remarque}

\noindent We use the following abbreviations:
\begin{align*}
\noindent B_i \varphi &\ce K_i (\correct{i} \to \varphi)
&
H_i \varphi &\ce \correct{i} \to B_i \varphi = \correct{i} \to K_i (\correct{i} \to \varphi)
\\
E^B \varphi &\ce \bigwedge\nolimits_{j\in \agents} B_j \varphi 
&
E^H \varphi &\ce \bigwedge\nolimits_{j\in \agents} H_j \varphi 
\\
E^{\eventually B} \varphi &\ce \bigwedge\nolimits_{j\in \agents} \eventually B_j \varphi 
&
E^{\eventually H} \varphi &\ce \bigwedge\nolimits_{j\in \agents} \eventually H_j \varphi 
\end{align*} 
It has been shown in~\cite{Fru19:ESSLLI} that hope is a normal modality, in particular, $\models H_i(\varphi \land \psi) \to H_i\varphi \land H_i\psi$.
We define eventual common hope~$C^{\eventually H} \varphi$ as the greatest fixed point of the equation $\chi\leftrightarrow E^{\eventually H}(\varphi \land \chi)$ in the standard way (using the Knaster--Tarski theorem~\cite{KnasterTarski}) and use the following properties (the general versions of which can be found  in Lemma 11.5.7~in~\cite{bookof4}): for any interpreted system~$I$,
\begin{gather}
\label{eq:fixpointaxiom}
I \quad\models\qquad C^{\eventually H} \varphi \leftrightarrow E^{\eventually H}(\varphi \land C^{\eventually H} \varphi);
\\
\label{eq:inductionrule}
\textnormal{if} \qquad I\quad \models \quad\psi \to E^{\eventually H}(\varphi \land \psi), \qquad \textnormal{then} \qquad I \quad\models\quad \psi \to C^{\eventually H} \varphi.
\end{gather}

\subsection{Modeling}

\begin{definition}\label{def:atomicpropsFR}
	For an agent~$i \in \agents$, we define:
	\begin{align*}
	\truestart[i] &\quad\ce \quad\Y{\trueoccurred[i]{\START}} \wedge \correct{i}
	&
		\truefire[i] &\quad\ce\quad  \trueoccurred[i]{FIRE} \wedge \correct{i}
		\\
	\truestart[] &\quad\ce\quad \bigvee\nolimits_{j \in \agents} \truestart[j]
	&
		\truefire[] &\quad\ce\quad \bigvee\nolimits_{j \in \agents} \truefire[j]
	\end{align*}
\end{definition}
Note  that for one of these formulas to be true, it is necessary for (one of) the involved agent(s) to be correct not only at the time  the event/action in question occurred but also at the time of the evaluation. Using the yesterday modality ~$Y$ in $\truestart[i]$ accounts for the fact that agents cannot act on a precondition in the same round it is established.\looseness=-1

Using Def.~\ref{def:atomicpropsFR}, we can translate the specification of~FRR (stated in Def.~\ref{def:FR}) as follows:
\begin{definition}[Modeling Firing Rebels]\label{def:FRmodel}
	An interpreted system~$I$ is consistent with Firing Rebels with Relay for~$f\geq 0$ if the following conditions Correctness\/~\textup{(C)}, Unforgeability\/~\textup{(U)}, and Relay\/~\textup{(R)} hold:
	\begin{align*}
	\textup{(C)} \qquad& I \qquad\models  \qquad\bigvee\limits_{\parbox{1.3cm}{\scriptsize\centering$G\subseteq\agents$\\ $|G|\! =\! 2f\!+\!1$}} \bigwedge\limits_{j\in G} \K{j}{(\correct{j} \rightarrow \truestart[])} \rightarrow {\bigwedge\limits_{i\in \agents} \eventually (\correct{i} \to \truefire[i]})
	\\
	\textup{(U)} \qquad& I\qquad \models\qquad  \truefire[] \rightarrow \truestart
	\\
	\textup{(R)} \qquad& I \qquad\models\qquad \truefire[] \rightarrow {\bigwedge_{i\in \agents} \eventually (\correct{i}\to \truefire[i])}
	\end{align*}
	
\end{definition}

\begin{remarque}[Variants of eventuality]
The phrase \emph{all correct agents fulfill~$\varphi_i$ eventually} in Def.~\ref{def:FR} can be formalized in two different ways:
\begin{itemize}
\item ${\bigwedge_{i\in \agents} \eventually{(\correct{i} \to \varphi_i})}$ states that each agent will either become faulty at some point in the future or will fulfill its respective~$\varphi_i$ at some point in the future.
\item $\eventually{\bigwedge_{i\in \agents} (\correct{i} \to \varphi_i})$ states that there is one moment in the future by which every agent still correct fulfills its respective~$\varphi_i$.
\end{itemize}
The second statement is a strengthening of the first by demanding agents to have  one common moment  by which all correct agents fulfill their respective $\varphi_i$'s. On the other hand, the first variant is a more intuitive reading that is more widely applicable. 
Fortunately, for our model of~FRR with $\varphi_i = \truefire[i]$,  the two formulations are equivalent because, due to agents having perfect recall, $\correct{i} \to \truefire[i]$ is a stable fact.\looseness=-1
\end{remarque}

\subsection{Necessary and Sufficient Level of Knowledge}

The goal of this subsection is to
\begin{enumerate}[(a)]
\item
strengthen the given necessary conditions on a single agent's firing --- namely, that $\truestart$ must hold by Unforgeability~(U)  and  $\bigwedge_{i\in \agents} \eventually (\correct{i}\to \truefire[i])$ must hold by Relay~(R) --- 
to statements that describe the state of knowledge necessary for the agent to achieve before firing;
\item
show that firing upon reaching this state of knowledge is sufficient for satisfying the conditions Unforgeability~(U) and Relay~(R) on all correct agents eventually firing;
\item
 show how Correctness~(C) helps simplify these necessary and sufficient conditions in the presence of sufficiently many agents.
\end{enumerate}

Thus, protocols prescribing an agent to fire as soon as this state of knowledge is achieved are correct and optimal in the sense that firing earlier would violate the necessary conditions whereas firing prescribed by this state of knowledge is guaranteed to fulfill all requirements of~FRR. 

Note that the case when  insufficiently many agents learn that START occurred at a correct agent trivially satisfies condition~(C). In this case, FRR~reduces to~(U)+(R), a problem with a trivial solution of all correct agents not firing. It is the combination of all all three conditions that makes~FRR a problem worth the analysis.

The first lemma formalizes the fact that, since agents have perfect recall of their past perceptions, reasoning under the assumption of their own correctness leads them to \emph{believe} that these perceptions were accurate. For instance, an agent who recalls observing $\START$ believes that, unless it is faulty, a correct agent (namely, itself) observed $\START$. 
(A formal  proof can be found in the Appendix on p.~\pageref{proof:lemmasix}.)

\begin{lemma}
\label{lem:aux}
For any interpreted system~$I$ and any agent~$i \in \agents$:
\begin{align}
I &\quad\models\quad \truefire[i] \to B_i\truefire[i] 
\\
\label{eq:aux}
I &\quad\models\quad \truefire[i] \to B_i\truefire
\\
I &\quad\models\quad \truestart[i] \to B_i \truestart[i] 
\\
I &\quad\models\quad \truestart[i] \to B_i \truestart
\end{align}
\end{lemma}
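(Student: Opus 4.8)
The plan is to unfold both modalities and reduce the four claims to two, then discharge those two using only the framework's treatment of correctness and perfect recall. Since $B_i\chi = K_i(\correct{i}\to\chi)$, the semantics of $K_i$ tells me that to prove a validity of the shape $\psi\to B_i\chi$ it suffices to fix an arbitrary point $(r,t)$ with $(I,r,t)\models\psi$ and an arbitrary point $(r',t')$ with $r'_i(t') = r_i(t)$ and $(I,r',t')\models\correct{i}$, and to show $(I,r',t')\models\chi$. I would first note that the two disjunctive claims follow from the two indexed ones: because $\truefire[i]$ is one of the disjuncts of $\truefire$ we have $I\models\truefire[i]\to\truefire$, and $B_i$ inherits monotonicity from $K_i$ (if $I\models\alpha\to\beta$ then $I\models(\correct{i}\to\alpha)\to(\correct{i}\to\beta)$, so $I\models B_i\alpha\to B_i\beta$). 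Hence $I\models B_i\truefire[i]\to B_i\truefire$, which together with the first claim gives \eqref{eq:aux}; the $\truestart$ case is identical. So only $\truefire[i]\to B_i\truefire[i]$ and $\truestart[i]\to B_i\truestart[i]$ need a direct argument.

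For the firing claim I would isolate the two elementary facts the framework supplies about records, both in keeping with the paper's bar convention. First, whether $i$'s local history contains a (not necessarily accurate) record of $o$ --- write this $\occurred[i]{o}$ --- is a property of the local state alone, so it has the same truth value at any two indistinguishable points; and an accurate record is a record, i.e. $I\models\trueoccurred[i]{o}\to\occurred[i]{o}$. Second, a record held by a correct agent is accurate: $I\models\correct{i}\wedge\occurred[i]{o}\to\trueoccurred[i]{o}$, which is exactly the statement that the brain-in-a-vat scenario is unavailable to correct agents. Now assume $(I,r,t)\models\truefire[i]$; then $r_i(t)$ contains a record of the action $\FIRE$, hence so does $r'_i(t')=r_i(t)$; since $(I,r',t')\models\correct{i}$, that record is accurate, giving $(I,r',t')\models\trueoccurred[i]{FIRE}$, and with correctness $(I,r',t')\models\truefire[i]$. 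This establishes $\truefire[i]\to B_i\truefire[i]$.

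The $\truestart$ claim runs along the same lines, but the yesterday operator forces the transport of a record across the indistinguishability relation one step into the past, and this is the step I expect to be the main obstacle. From $(I,r,t)\models\truestart[i]$ I get $t\geq 1$ together with an accurate record of $\START$ already present at $(r,t-1)$; by perfect recall this record persists into $r_i(t)$, carrying the information that it was present no later than the previous node. The crux is that perfect recall makes this last piece of information \emph{local}, so that the equal local state $r'_i(t')=r_i(t)$ witnesses the same thing in $r'$: namely $t'\geq 1$ and a record of $\START$ in $r'_i(t'-1)$. Here I would also use that correctness is backward-closed in time --- $(I,r',t')\models\correct{i}$ entails $(I,r',t'-1)\models\correct{i}$, since no faulty event by $t'$ implies none by $t'-1$ --- so that the accuracy fact applies at $(r',t'-1)$ and yields $(I,r',t'-1)\models\trueoccurred[i]{\START}$, i.e. $(I,r',t')\models\Y{\trueoccurred[i]{\START}}$. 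Together with $\correct{i}$ at $(r',t')$ this is $\truestart[i]$, completing $\truestart[i]\to B_i\truestart[i]$.

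The only genuinely framework-dependent step is the localisation of ``$\START$ was recorded by the previous node'' needed for the yesterday operator; everything else is bookkeeping over the semantics of $K_i$, the monotonicity of $B_i$, and the two facts that raw records are local and that correct agents' records are accurate. I would therefore spend the bulk of the write-up isolating this localisation as an auxiliary observation about $\Y{\trueoccurred[i]{o}}$ at correct nodes, after which both indexed claims, and then via monotonicity both disjunctive claims, drop out uniformly.
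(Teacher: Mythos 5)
Your reduction of \eqref{eq:aux} and the fourth claim to the two indexed claims via monotonicity of~$B_i$, and your argument for $\truefire[i] \to B_i\truefire[i]$, coincide with the paper's proof: the paper likewise fixes a point $(I,r,t)$ satisfying $\truefire[i]$, transports the record of $\FIRE$ through the equal local state $r'_i(t')=r_i(t)$, uses $\correct{i}$ at $(r',t')$ to conclude that the record must correspond to a correct action, and then dispatches \eqref{eq:aux} by monotonicity/normality of~$B_i$. Where you diverge is the $\START$ case: the paper simply declares that ``the argument is the same'' for $\FIRE$ and $\START$ and gives no separate treatment of the yesterday operator, whereas you correctly single out the transport of $\Y{\trueoccurred[i]{\START}}$ across the indistinguishability relation as the crux.

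Your resolution of that crux, however, is a genuine gap. You assert that perfect recall makes ``the record was present no later than the previous node'' a \emph{local} fact, so that the equal local state $r'_i(t')=r_i(t)$ witnesses a record of $\START$ already in $r'_i(t'-1)$. Nothing in the framework supplies this, and on the paper's own description of asynchrony it fails in the critical sub-case: local states change only when something is recorded, and agents may ``stay in the same local state for several rounds in a row,'' so an identical local state cannot distinguish whether its most recent entry was appended in the round that just ended or in an earlier one. Concretely, suppose in~$r$ agent~$i$ observes $\START$ in the round between times $0$ and~$1$ and records nothing afterwards, so that $r_i(1)=r_i(2)$ and $(I,r,2)\models\truestart[i]$; and suppose $R$ also contains a run~$r'$ in which the correct agent~$i$ observes $\START$ one round later, so that $r'_i(2)=r_i(2)$ while $r'_i(1)$ lacks the record. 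Then at $(r',2)$ we have $\correct{i}$ but not $\Y{\trueoccurred[i]{\START}}$, refuting your localisation claim precisely when the $\START$ record is the last entry of the history (when some later entry exists, your argument does go through, since the $\START$ entry must then have been appended at least one round earlier). Your write-up flags this as ``the only genuinely framework-dependent step'' but defers its proof, and that deferred step is exactly what perfect recall alone cannot discharge. To be fair, the paper's one-line dismissal of the $\START$ case passes over the same point in silence --- so you have located a real subtlety that the published proof does not address --- but locating it is not the same as proving it, and as written your argument for the third and fourth claims is incomplete.
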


Unforgeability~(U) states that $\truestart[]$ is a necessary condition for a correct agent firing. It follows from the Knowledge of Preconditions Principle that any correct agent must ascertain $\truestart[]$ (modulo its own correctness) before firing. We formalize this argument and provide an independent proof:
\begin{lemma}[State of knowledge necessary for firing in presence of Unforgeability~(U)]\label{lem:U}
	Let $I$~be an interpreted system consistent with Unforgeability\/~\textup{(U)}. For any agent~$i \in \agents$,
	\begin{equation}
	\label{eq:U:nec}
	 I \quad \models\quad \truefire[i]
	\to B_i \truestart[] .
	\end{equation}
\end{lemma}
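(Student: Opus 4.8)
The plan is to reduce \eqref{eq:U:nec} to Unforgeability~(U) by passing through the already-established equation~\eqref{eq:aux} of Lemma~\ref{lem:aux} and exploiting that the belief modality~$B_i$ is \emph{monotone}: whenever $I \models \varphi \to \psi$, then also $I \models B_i\varphi \to B_i\psi$. This monotonicity is not postulated separately but follows from the definition $B_i\varphi = K_i(\correct{i}\to\varphi)$, since a valid implication $\varphi\to\psi$ yields the valid implication $(\correct{i}\to\varphi)\to(\correct{i}\to\psi)$, and the normal knowledge operator~$K_i$ preserves valid implications placed under its scope. This route gives an argument that is genuinely independent of the Knowledge of Preconditions Principle, as promised in the text preceding the lemma.

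The steps I would carry out, in order, are as follows. First I would note that, under the notation of Definition~\ref{def:atomicpropsFR}, the consequent written $\truestart$ in Unforgeability~(U) is literally the formula $\truestart[] = \bigvee_{j\in\agents}\truestart[j]$, and the antecedent is $\truefire = \bigvee_{j\in\agents}\truefire[j]$, so that (U) reads $I \models \truefire \to \truestart[]$. Applying monotonicity of~$B_i$ to this validity gives $I \models B_i\truefire \to B_i\truestart[]$. Next I would invoke equation~\eqref{eq:aux} of Lemma~\ref{lem:aux}, namely $I \models \truefire[i] \to B_i\truefire$, which captures that an agent recalling its own firing believes (modulo its own correctness) that some correct agent fired. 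Chaining the two implications yields $I \models \truefire[i] \to B_i\truestart[]$, which is exactly~\eqref{eq:U:nec}.

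If one prefers to avoid treating monotonicity as a black box, the argument can be spelled out pointwise. Fix $(r,t)$ with $(I,r,t)\models\truefire[i]$; by~\eqref{eq:aux} we get $(I,r,t)\models K_i(\correct{i}\to\truefire)$. To establish $(I,r,t)\models K_i(\correct{i}\to\truestart[])$, take any $(r',t')$ with $r'_i(t')=r_i(t)$ and assume $(I,r',t')\models\correct{i}$; indistinguishability gives $(I,r',t')\models\correct{i}\to\truefire$, hence $(I,r',t')\models\truefire$, and Unforgeability~(U), being valid, then forces $(I,r',t')\models\truestart[]$. The point to get right here is conceptual rather than computational: Unforgeability constrains the actual global state and so cannot be applied to~$i$'s belief directly, and the entire role of~\eqref{eq:aux} is to transport $i$'s recollection of its own firing into a belief about the disjunction~$\truefire$, at which stage the \emph{validity} of~(U) may legitimately be imported under~$B_i$. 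Once this interplay is set up, there is essentially no remaining obstacle.
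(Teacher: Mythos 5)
Your proof is correct and takes exactly the route of the paper's own (one-line) proof, which derives \eqref{eq:U:nec} ``from \eqref{eq:aux}, (U), and monotonicity/normality of $B_i$''; your chaining of $\truefire[i] \to B_i\truefire$ with $B_i\truefire \to B_i\truestart[]$ is precisely that argument spelled out. The pointwise elaboration is a faithful unfolding of the same step and introduces nothing divergent.
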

\begin{proof}
Immediately follows from~\eqref{eq:aux}, (U),~and monotonicity/normality of~$B_i$.
\end{proof}

\begin{corollary}
For any interpreted system consistent with~FR,~\eqref{eq:U:nec}~is satisfied for all agents.
\end{corollary}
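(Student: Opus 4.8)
The plan is to reduce the corollary to Lemma~\ref{lem:U} via the simple observation that Firing Rebels~(FR) already contains Unforgeability~(U) among its defining conditions. By Definition~\ref{def:FR}, a system is consistent with~FR exactly when it satisfies Correctness~(C) and Unforgeability~(U); at the level of the epistemic modeling (Def.~\ref{def:FRmodel}), this means in particular that $I \models \truefire[] \to \truestart$ holds. Thus any interpreted system~$I$ consistent with~FR is, a fortiori, consistent with~(U) in precisely the sense demanded by the hypothesis of Lemma~\ref{lem:U}.

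Given this, I would simply invoke Lemma~\ref{lem:U}. That lemma asserts that for every interpreted system consistent with~(U) and every agent~$i \in \agents$, the validity~\eqref{eq:U:nec}, namely $I \models \truefire[i] \to B_i\truestart[]$, holds. Instantiating it at an arbitrary~$I$ consistent with~FR and ranging over all~$i \in \agents$ delivers the statement of the corollary verbatim.

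There is essentially no obstacle here: the content lies entirely in Lemma~\ref{lem:U} (which in turn rests on~\eqref{eq:aux}, condition~(U), and the normality of~$B_i$), and the corollary merely records that FR-consistency is a strengthening of (U)-consistency. The only point worth a moment's care is confirming that omitting the Relay condition~(R) does not disturb~(U); since (C),~(U),~and~(R) enter Def.~\ref{def:FRmodel} as independent conjuncts, this is immediate.
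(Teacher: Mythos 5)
Your proof is correct and matches the paper's (implicit) reasoning exactly: the paper states this corollary without proof precisely because, as you observe, FR-consistency includes Unforgeability~(U) by Definition~\ref{def:FR}, so Lemma~\ref{lem:U} applies verbatim. Nothing further is needed.
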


Similarly, lifting the Relay condition~(R) to the level of agent's knowledge yields the requirement that, in order to fire, an agent must believe that all correct agents eventually will have fired.
\begin{lemma}[State of knowledge necessary for firing in presence of  Relay~(R)]\label{lem:R}
	Let $I$~be an interpreted system consistent with Relay\/~\textup{(R)}. For any agent~$i \in \agents$,
\begin{equation}
	\label{eq:R:nec}
	I \quad\models\quad \truefire[i]
	 \to B_i{\bigwedge\nolimits_{j\in\agents} \eventually (\correct{j} \rightarrow \truefire[j])}.
\end{equation}
\end{lemma}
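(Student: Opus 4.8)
The plan is to mirror the proof of Lemma~\ref{lem:U} almost verbatim, with the Relay condition~\textup{(R)} playing the role that Unforgeability~\textup{(U)} played there. The one property I rely on is that the belief modality~$B_i$ is monotone: whenever $I \models \alpha \to \beta$, the normality of~$K_i$ together with the definition $B_i\varphi = K_i(\correct{i}\to\varphi)$ yields $I \models B_i\alpha \to B_i\beta$ (since $\alpha\to\beta$ forces $(\correct{i}\to\alpha)\to(\correct{i}\to\beta)$, which $K_i$ then preserves).

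First I would start from the antecedent~$\truefire[i]$ and invoke the already-established property~\eqref{eq:aux}, namely $I \models \truefire[i]\to B_i\truefire$, to pass from agent~$i$'s own firing to belief in the global firing disjunction~$\truefire$. Next I would read Relay~\textup{(R)} as the valid implication $I \models \truefire \to \bigwedge_{j\in\agents}\eventually{(\correct{j}\to\truefire[j])}$. Applying monotonicity of~$B_i$ to this valid implication lifts it to $I \models B_i\truefire \to B_i\bigwedge_{j\in\agents}\eventually{(\correct{j}\to\truefire[j])}$, and chaining the two implications delivers~\eqref{eq:R:nec}.

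There is no genuine obstacle here; the argument is a direct transcription of Lemma~\ref{lem:U} with~\textup{(R)} substituted for~\textup{(U)} and the conjunction $\bigwedge_{j\in\agents}\eventually{(\correct{j}\to\truefire[j])}$ substituted for~$\truestart$. The only point worth a moment's attention is that, because~\textup{(R)} already supplies the whole conjunction as a single consequent, I never need to distribute~$B_i$ across the conjunction; plain monotonicity, applied once, suffices, and no appeal to the stronger normality property $B_i(\varphi\land\psi)\to B_i\varphi\land B_i\psi$ is required.
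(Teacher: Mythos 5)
Your proof is correct and follows essentially the same route as the paper, whose proof of Lemma~\ref{lem:R} likewise derives~\eqref{eq:R:nec} immediately from~\eqref{eq:aux}, Relay~\textup{(R)}, and monotonicity of~$B_i$. Your closing observation is also accurate: since~\textup{(R)} delivers the whole conjunction as one consequent, a single application of monotonicity suffices and no distribution of~$B_i$ over conjunction is needed.
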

\begin{proof}
Immediately follows from~\eqref{eq:aux}, (R),~and monotonicity of~$B_i$.
\end{proof}

Combining the conditions  necessary  for~(U)~and~(R), we  establish the following level of  knowledge necessary for firing in~FRR (a proof can be found in the Appendix on p.~\pageref{proof:theoremten}):\looseness=-1

\begin{theorem}[State of knowledge necessary for firing in presence of both~(U)~and~(R)]\label{thm:necessaryFR}
Let $I$~be an interpreted system consistent with\/~\textup{(U)}\/~and\/~\textup{(R)}. For any agent~$i\in\agents$,
\[
I \qquad\models\qquad \truefire[i] \to  B_i\left(\truestart[] \wedge {E^{\eventually H}\truestart[]}\right).
\]
\end{theorem}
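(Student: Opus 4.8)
The plan is to lean on the normality of the belief modality~$B_i$ in order to decompose the target. Since $\models B_i(\varphi \wedge \psi) \leftrightarrow (B_i\varphi \wedge B_i\psi)$, it suffices to establish the two validities $I \models \truefire[i] \to B_i\truestart[]$ and $I \models \truefire[i] \to B_i E^{\eventually H}\truestart[]$ separately and then recombine them. The first of these is exactly Lemma~\ref{lem:U}, so no new work is required there; all of the effort concentrates on the second conjunct, which carries the eventual common-hope content.

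For that second conjunct I would start from Lemma~\ref{lem:R}, which already supplies $I \models \truefire[i] \to B_i\bigwedge_{j\in\agents}\eventually{(\correct{j}\to\truefire[j])}$. Because $B_i$ is monotone, it then remains to prove the temporal validity
\[
I \quad\models\quad \bigwedge\nolimits_{j\in\agents}\eventually{(\correct{j}\to\truefire[j])} \;\to\; E^{\eventually H}\truestart[],
\]
after which monotonicity of~$B_i$ combined with Lemma~\ref{lem:R} yields $\truefire[i] \to B_i E^{\eventually H}\truestart[]$. Unfolding $E^{\eventually H}\truestart[] = \bigwedge_{j\in\agents}\eventually H_j\truestart[]$, this validity reduces, conjunct by conjunct, to the single-agent claim $I \models \eventually{(\correct{j}\to\truefire[j])} \to \eventually H_j\truestart[]$ for each fixed $j\in\agents$.

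The heart of the argument is this single-agent claim, and I expect it to be the main obstacle, because its proof turns on a case split at the time witnessing the eventuality. Fixing a point $(r,t)$ and a time $t'\ge t$ at which $\correct{j}\to\truefire[j]$ holds, I would split: if $j$ is faulty at $t'$, then $H_j\truestart[] = \correct{j}\to B_j\truestart[]$ holds vacuously there; if instead $j$ is correct at $t'$, then $\truefire[j]$ holds, and Lemma~\ref{lem:U} instantiated at agent~$j$ gives $B_j\truestart[]$, hence again $H_j\truestart[]$. Either way $(I,r,t')\models H_j\truestart[]$, so $(I,r,t)\models\eventually H_j\truestart[]$. The delicate point is precisely this disjunctive reading of $\correct{j}\to\truefire[j]$: the hope modality is tailored so that both the ``fired while correct'' and the ``has since become faulty'' cases discharge $H_j\truestart[]$, whereas with plain belief~$B_j$ in place of hope~$H_j$ the faulty case would fail.

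Finally I would reassemble the pieces. Conjoining the single-agent claims over all $j\in\agents$ gives the displayed validity; applying monotonicity of~$B_i$ together with Lemma~\ref{lem:R} produces $\truefire[i] \to B_i E^{\eventually H}\truestart[]$; pairing this with Lemma~\ref{lem:U}'s $\truefire[i] \to B_i\truestart[]$ and invoking the normality of~$B_i$ to recombine the conjuncts yields $I \models \truefire[i] \to B_i(\truestart[] \wedge E^{\eventually H}\truestart[])$, as desired.
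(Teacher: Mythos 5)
Your proof is correct and takes essentially the same route as the paper's: the paper likewise splits the conjunction using normality of~$B_i$, discharges $\truefire[i] \to B_i\truestart[]$ via Lemma~\ref{lem:U}, and derives $\truefire[i] \to B_i E^{\eventually H}\truestart[]$ from Lemma~\ref{lem:R} by substituting $B_j\truestart[]$ for the positive occurrence of $\truefire[j]$ (Lemma~\ref{lem:U} at agent~$j$) inside the belief context. Your case split on $\correct{j}$ at the witness time is simply the unpacked semantic form of the paper's one-line appeal to the replacement property for positive subformulas, since $(\correct{j}\to\truefire[j])\to(\correct{j}\to B_j\truestart[])$ follows propositionally from Lemma~\ref{lem:U}.
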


\begin{remarque}[Emergence of hope]\label{rem:hope_orig}
Note that $I \models \truefire[i] \to B_i{E^{\eventually B} \truestart}$ does not generally hold. 
We cannot strengthen the necessary condition  by replacing eventual mutual hope with eventual mutual belief, i.e.,~by omitting $\correct{j}$ therein. 
In other words, the use of hope for deeper iterations of knowledge modalities is crucial for the correct formulation. 
Indeed, in the case of our notion of belief, agent~$i$ can rarely have unconditional beliefs about another agent~$j$'s~\emph{beliefs}. 
The problematic situation is when  agent~$j$'s~perception is compromised. 
In that case, agent~$i$ has no way of ascertaining what $j$'s~erroneous input data might be and, hence, cannot determine what a correct agent would have inferred from these incorrect inputs. 
According to our notion of belief, whether agent~$i$ itself is correct or not, it reasons assuming that its own perceptions are the objective reality. The $\correct{j}$ assumption is, therefore, necessary to anchor~$j$ to the same (allegedly) objective reality  contemplated by~$i$, even though $j$'s~access to the facts of this objective reality is generally different from~$i$'s. 
Note also that $j$'s~reasoning is generally happening in the future relative to $i$'s~current reasoning, meaning that we also implicitly assume reality to be stable. 
\end{remarque}

\begin{remarque}[Relation to indexical sets]
Another approach to describing beliefs of  fault-prone agents is via  so-called indexical sets~\cite{bookof4,MT88}, which are variable (non-rigid) sets that can be used to represent the set of all correct agents at every point in the system. While our results could be reformulated in terms of indexical sets, there were several reasons for us to choose another language. Besides the ability to reason about all agents, whether correct or faulty, in a uniform way, we tried to stay as close as possible to the standard language of epistemic modal logic. Perhaps more importantly, however, was the moral lesson of the already mentioned Knowledge of Preconditions Principle~\cite{Mos15TARK}, which reveals how important it is for an agent to know all ingredients affecting its behavior, correctness of itself and other agents being one of them. Thus, we believe that the transparent and explicit use of correctness in our language is advantageous. An immediate example is the distinction between belief and hope discussed in Remark~\ref{rem:hope_orig}, which would have remained somewhat obscured in the indexical set notation.
\end{remarque}

\begin{remarque}[Eventual mutual hope is not sufficient]\label{rem:sufficientknowledge}
While using  $B_i\left(\truestart[] \wedge {E^{\eventually H}\truestart[]}\right)$ as a trigger for agent~$i$ firing  will ensure Unforgeability\/~\textup{(U)}, it is too weak to guarantee Relay\/~\textup{(R)}. Indeed, consider a system with 3~agents~($n=3$), at most one of which can become faulty~($f=1$). In such a system, receiving the same information from two independent sources is sufficient to believe in its validity, while  information from only one source  without observing it first hand is not. Suppose that the protocol forces a correct agent to notify all other agents whenever it observed\/ $\START$. 
Consider a run where agent~$b$ is byzantine from the beginning, whereas agents~$c_1$~and~$c_2$ remain correct. Let $c_1$~and~$c_2$~each observe\/ $\START$ and, hence, notify all agents  about it. Meanwhile $b$~falsely notifies~$c_2$ that it too observed\/ $\START$ but will never duplicate this message to~$c_1$. Thus, 
\begin{itemize}
\item correct~$c_2$ observed\/ $\START$ and eventually received 2~confirmations of\/ $\START$ from~$c_1$~and~$b$; \looseness=-1
\item correct~$c_1$ observed\/ $\START$ and eventually received 1~confirmation of\/ $\START$  from~$c_2$;
\item faulty~$b$ did not observe\/ $\START$ but was eventually notified of\/ $\START$ by both~$c_1$~and~$c_2$.
\end{itemize}
In this situation, all agents eventually believe that\/ $\START$ was correctly observed ($c_1$~and~$c_2$~saw it themselves, whereas $b$~has 2~independent confirmations). Moreover, $c_2$~has a reason to believe in the eventual mutual hope of\/ $\START$. 
Indeed, hope would be trivially satisfied for a faulty agent, whereas any correct agent would eventually receive at least 2~confirmations out of~3 that $c_2$~itself possesses. Thus, according to the proposed knowledge threshold,  $c_2$~should fire. 
On the other hand, $c_1$~will never fire because it cannot be sure that $b$~will eventually hope that\/ $\START$ occurred. In $c_1$'s~mind, if $b$~were correct and $c_2$~were faulty and did not send a confirmation to~$b$, then $b$~would only ever receive 1~confirmation, which is not sufficient to make it trust\/ $\START$ truly occurred. Hence, $c_1$~would never fire, and Relay\/~\textup{(R)} would be violated.

The issue here is that  
$\B{i}{E^{\eventually H}{\truestart[]}}$ for one correct agent~$i$ 
does not generally imply that eventually 
$\B{j}{E^{\eventually H}{\truestart[]}}$ for all other correct agents~$j$. 
\end{remarque}

Thus, although $\B{i}{E^{\eventually H}{\truestart[]}}$ is necessary before $i$~can fire and is in principle actionable, acting on it may be premature. The necessary state of knowledge must be further strengthened. 
Since FRR~involves an agreement property (one correct agent fires only if all other correct agents also fire eventually), it is not very surprising that, in fact, some form of common knowledge, specifically  \emph{eventual common hope}, plays a role.  We have shown (see a proof in the Appendix on p.~\pageref{proof:theoremfourteen}) that 
Unforgeability and Relay together imply that, in order to fire an agent must ascertain (modulo its own correctness) both that $\START$ was observed by some correct agent and  the eventual common hope of the same fact:
\begin{theorem}[State of knowledge necessary for firing in presence of both~(U)~and~(R)]\label{thm:necessarycommonFRR}
Let $I$~be an interpreted system consistent with\/~\textup{(U)}\/~and\/~\textup{(R)}. For any agent~$i\in\agents$, 
\begin{equation}
\label{eq:URR:nec}
I\quad\models\quad \truefire[i] \to \B{i}{\left(
\truestart[] \wedge {\CH{}{\truestart[]}}\right)
}.
\end{equation}
\end{theorem}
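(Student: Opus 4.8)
The plan is to derive the target from Theorem~\ref{thm:necessaryFR}, which already gives us $I \models \truefire[i] \to B_i(\truestart[] \wedge E^{\eventually H}\truestart[])$, by upgrading the eventual \emph{mutual} hope $E^{\eventually H}\truestart[]$ inside the belief to eventual \emph{common} hope $\CH{}{\truestart[]}$. Since $B_i$ is a normal modality that distributes over conjunction and is monotone, it suffices to show the single validity
\begin{equation}
\label{eq:plan:key}
I \quad\models\quad \truestart[] \wedge E^{\eventually H}\truestart[] \to \CH{}{\truestart[]},
\end{equation}
from which the theorem follows by applying $B_i$ to both sides, using $B_i(\truestart[] \wedge E^{\eventually H}\truestart[]) \to B_i\CH{}{\truestart[]}$ and recombining with $B_i\truestart[]$ via normality. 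So the whole difficulty is concentrated in establishing~\eqref{eq:plan:key}.

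To prove~\eqref{eq:plan:key} I would use the induction rule~\eqref{eq:inductionrule} for the greatest fixed point, taking the invariant $\psi \ce \truestart[] \wedge E^{\eventually H}\truestart[]$ and $\varphi \ce \truestart[]$. By~\eqref{eq:inductionrule}, it is enough to verify
\begin{equation}
\label{eq:plan:ind}
I \quad\models\quad \psi \to E^{\eventually H}(\truestart[] \wedge \psi),
\end{equation}
because then~\eqref{eq:inductionrule} yields $I \models \psi \to \CH{}{\truestart[]}$, which is exactly~\eqref{eq:plan:key}. Unfolding the definitions, \eqref{eq:plan:ind} requires showing that whenever $\truestart[]$ holds together with $E^{\eventually H}\truestart[]$ (i.e.~$\bigwedge_{j}\eventually H_j\truestart[]$), then for each agent~$j$ we have $\eventually H_j(\truestart[] \wedge \truestart[] \wedge E^{\eventually H}\truestart[])$, which after absorbing the duplicate conjunct reduces to $\eventually H_j(\truestart[] \wedge E^{\eventually H}\truestart[])$ for every~$j$.

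The crux is therefore to show $I \models \psi \to E^{\eventually H}\psi$, i.e.~that the invariant $\psi = \truestart[] \wedge E^{\eventually H}\truestart[]$ is itself eventually hoped-for by every agent. The key facts I expect to invoke are: first, that $E^{\eventually H}\truestart[]$ is a \emph{stable} formula, since each conjunct $\eventually H_j\truestart[]$ is of the form $\lozenge(\dots)$ and $\lozenge$ is upward-persistent in time; and second, that $\truestart[]$ and $E^{\eventually H}\truestart[]$ are, in the relevant sense, themselves the content of the eventual hope already guaranteed by~$\psi$. Concretely, from $\eventually H_j\truestart[]$ and the stability of $E^{\eventually H}\truestart[]$ I would argue that agent~$j$ eventually hopes not merely $\truestart[]$ but $\truestart[] \wedge E^{\eventually H}\truestart[]$, using that $H_j$ distributes over conjunction (normality of hope, cited before~\eqref{eq:fixpointaxiom}) and that $j$'s hope of the stable, already-established $E^{\eventually H}\truestart[]$ can be secured. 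The main obstacle I anticipate is precisely this last point: making rigorous \emph{why} agent~$j$ can eventually hope for $E^{\eventually H}\truestart[]$ and not just for $\truestart[]$. This is where the asymmetry flagged in Remark~\ref{rem:sufficientknowledge} bites, and I expect one must exploit the interaction of perfect recall, the stability of the eventual-hope conjuncts, and possibly an auxiliary lemma (analogous to Lemma~\ref{lem:aux}) that lets an agent's hope absorb facts that are guaranteed to hold by the time the hope is realized.
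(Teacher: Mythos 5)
There is a genuine gap: the key validity your whole plan reduces to, $I \models \truestart[] \wedge E^{\eventually H}\truestart[] \to \CH{}{\truestart[]}$, is simply false in systems consistent with (U) and (R), so the reduction fails at its first step. To see this, adapt the scenario of Remark~\ref{rem:sufficientknowledge} to a protocol in which agents send notifications upon observing $\START$ but never fire: (U) and (R) then hold vacuously, yet in the run described there $\truestart[]$ holds and $E^{\eventually H}\truestart[]$ holds (trivially for the faulty~$b$, and by Lemma~\ref{lem:aux} for the correct $c_1$~and~$c_2$), while $\CH{}{\truestart[]}$ fails: by \eqref{eq:fixpointaxiom} it would require $\eventually{\B{c_1}{\eventually{H_b\truestart[]}}}$, and $c_1$ can never exclude the alternative in which $b$~is correct, $c_2$~is faulty and silent towards~$b$, so that $b$~forever holds only one confirmation and never comes to believe $\truestart[]$. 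The obstacle you flag at the end (``why can agent~$j$ eventually hope for $E^{\eventually H}\truestart[]$ and not just for $\truestart[]$'') is thus not a technical difficulty to be overcome by a stability argument --- it is a genuine failure, and it is exactly the content of the closing sentence of Remark~\ref{rem:sufficientknowledge}. The upgrade from mutual to common hope is only available with the additional resources of the Lifting Lemma~\ref{lem:allknow}, namely Correctness~(C), $|\agents| \geq 3f+1$, and property~\eqref{eq:nec_UR_hope}, none of which Theorem~\ref{thm:necessarycommonFRR} assumes.

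The structural mistake is discarding the antecedent $\truefire[i]$ before applying coinduction. The paper keeps it and coinducts on \emph{firing} rather than on hoped-for $\START$: combining (R) with \eqref{eq:aux} gives $I \models \truefire \to E^{\eventually H}\truefire$ --- under Relay, a correct fire is self-propagating, which $\truestart[] \wedge E^{\eventually H}\truestart[]$ is not --- so \eqref{eq:inductionrule} applies with $\varphi = \psi = \truefire$ to yield $I \models \truefire \to \CH{}{\truefire}$; monotonicity of the greatest fixed point together with (U) then transfers this to $I \models \truefire \to \CH{}{\truestart[]}$, and \eqref{eq:aux} plus normality of~$B_i$ finishes the proof. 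So the correct coinduction invariant is $\truefire$, not your $\psi$. A secondary slip worth noting: your justification that $E^{\eventually H}\truestart[]$ is stable because ``$\lozenge$ is upward-persistent in time'' is not a valid general principle --- $\eventually{\varphi}$ at time~$t$ does not entail $\eventually{\varphi}$ at later times unless $\varphi$ is stable or recurrent; for these particular formulas stability can be argued via perfect recall, but even granting it, the argument above shows this would not rescue the proposed route.
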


\begin{corollary}
\label{cor:nec_FRR}
For any  interpreted system consistent with~FRR,\/ \eqref{eq:URR:nec}~is satisfied for all agents.
\end{corollary}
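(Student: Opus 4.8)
The plan is to obtain the common-hope target by a single application of the greatest-fixed-point (coinduction) rule~\eqref{eq:inductionrule} and then transport the resulting validity through the belief operator~$B_i$. Since $C^{\eventually H}\truestart[]$ is the greatest fixed point of $\chi \leftrightarrow E^{\eventually H}(\truestart[] \wedge \chi)$, to prove $I \models \psi \to C^{\eventually H}\truestart[]$ for a suitable formula~$\psi$ it suffices, by~\eqref{eq:inductionrule}, to exhibit~$\psi$ as a post-fixed point, i.e.,~to establish $I \models \psi \to E^{\eventually H}(\truestart[] \wedge \psi)$. The one creative step is to guess the invariant, and I would take $\psi \ce \truefire[]$, the statement that some correct agent has fired. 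This is the natural candidate, because Relay propagates firing to every correct agent, so ``a correct agent has fired'' is precisely the self-reproducing fact that Relay guarantees to spread (up to eventual hope) across the whole group.

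First I would prove $I \models \truefire[] \to E^{\eventually H}(\truestart[] \wedge \truefire[])$, unfolding the right-hand side as $\bigwedge_{j\in\agents}\eventually{H_j(\truestart[] \wedge \truefire[])}$. Fix a point $(r,t)$ with $(I,r,t)\models\truefire[]$ and fix an agent~$j$. By Relay~(R) there is a time $t' \ge t$ at which $\correct{j} \to \truefire[j]$ holds, and I would split on the status of~$j$ there. If $\correct{j}$ fails at $(r,t')$, then $H_j(\truestart[] \wedge \truefire[]) = \correct{j} \to B_j(\truestart[] \wedge \truefire[])$ holds vacuously. If instead $\truefire[j]$ holds at $(r,t')$, then Lemma~\ref{lem:U} gives $B_j\truestart[]$, \eqref{eq:aux} gives $B_j\truefire[]$, and normality of~$B_j$ combines these into $B_j(\truestart[] \wedge \truefire[])$, which in turn implies $H_j(\truestart[] \wedge \truefire[])$. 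In both branches $(I,r,t') \models H_j(\truestart[] \wedge \truefire[])$, so $(I,r,t) \models \eventually{H_j(\truestart[] \wedge \truefire[])}$; since~$j$ was arbitrary, the conjunction over all agents follows. Applying~\eqref{eq:inductionrule} with $\psi = \truefire[]$ and $\varphi = \truestart[]$ then yields $I \models \truefire[] \to C^{\eventually H}\truestart[]$.

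Finally I would lift this validity under~$B_i$. Monotonicity of~$B_i$ turns $I \models \truefire[] \to C^{\eventually H}\truestart[]$ into $I \models B_i\truefire[] \to B_i C^{\eventually H}\truestart[]$, and chaining with~\eqref{eq:aux} (i.e.~$\truefire[i] \to B_i\truefire[]$) gives $I \models \truefire[i] \to B_i C^{\eventually H}\truestart[]$. Combining this with Lemma~\ref{lem:U} ($\truefire[i] \to B_i\truestart[]$) via the normality of~$B_i$, which merges $B_i\truestart[]$ and $B_i C^{\eventually H}\truestart[]$ into $B_i(\truestart[] \wedge C^{\eventually H}\truestart[])$, produces exactly~\eqref{eq:URR:nec}.

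I expect the main obstacle to be the first two steps: identifying $\truefire[]$ as the correct coinductive invariant and verifying that it is a post-fixed point. The crux of the case analysis is recognizing that the faulty branch is discharged by the \emph{vacuous} truth of~$H_j$ --- this is exactly why hope, rather than belief, is the appropriate modality here (cf.~Remark~\ref{rem:hope_orig}) --- whereas the correct branch is discharged by fusing Unforgeability-level belief in $\truestart[]$ with the perfect-recall belief in $\truefire[]$. Once the post-fixed-point property is in hand, the invocation of~\eqref{eq:inductionrule} and the closing passage through~$B_i$ are routine normal-modal bookkeeping.
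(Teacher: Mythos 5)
Your proof is correct, and note that it amounts to re-proving Theorem~\ref{thm:necessarycommonFRR} from scratch: the paper leaves Corollary~\ref{cor:nec_FRR} without a printed proof because it is immediate from that theorem, since consistency with~FRR entails consistency with~(U)~and~(R) and condition~(C) plays no role. Measured against the paper's proof of the theorem, your route shares the key idea --- coinduction via~\eqref{eq:inductionrule} with $\truefire[]$ as the invariant, followed by transport through~$B_i$ using~\eqref{eq:aux}, Lemma~\ref{lem:U}, and normality --- but decomposes it differently. The paper first derives $I \models \truefire[] \to E^{\eventually H}\truefire[]$ syntactically, by replacing the positive subformula $\truefire[j]$ in Relay with $B_j\truefire[]$ via~\eqref{eq:aux}; it then applies~\eqref{eq:inductionrule} with $\varphi = \psi = \truefire[]$ to obtain $\truefire[] \to \CH{}{\truefire[]}$, and only afterwards converts this to $\CH{}{\truestart[]}$ by combining~(U) with the fact that the greatest fixed point of a monotone operator is itself monotone. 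You instead fold Unforgeability into the invariant check, verifying the stronger post-fixed-point property $\truefire[] \to E^{\eventually H}\bigl(\truestart[] \wedge \truefire[]\bigr)$ semantically --- your case split on $\correct{j}$ at the witnessing time, with the faulty branch discharged by vacuity of~$H_j$, is exactly the semantic content of the paper's positive-subformula replacement, and Lemma~\ref{lem:U} supplies the extra conjunct $B_j\truestart[]$ --- so a single application of~\eqref{eq:inductionrule} with $\psi = \truefire[]$, $\varphi = \truestart[]$ lands directly on $\CH{}{\truestart[]}$. What your variant buys is the elimination of the GFP-monotonicity step, the one non-routine fixed-point fact in the paper's argument; what the paper's variant buys is a purely syntactic derivation and the reusable intermediate validity $\truefire[] \to \CH{}{\truefire[]}$. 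Both arguments are sound, and your closing passage through~$B_i$ matches the paper's exactly.
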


We now show (see a proof in the Appendix on p.~\pageref{proof:theoremsixteen}) that, unlike belief in eventual mutual hope (see Remark~\ref{rem:sufficientknowledge}),  belief in eventual \emph{common} hope is sufficient to fulfill Unforgeability and Relay, i.e.,~that firing as soon as the necessary state of knowledge 
from Theorem~\ref{thm:necessarycommonFRR} is achieved does guarantee that both~(U)~and~(R) are fulfilled:
\begin{theorem}[Sufficient conditions for~(U)~and~(R)]\label{thm:necessaryandsufficientFR}
For any interpreted system~$\intsys$:
\begin{enumerate}
\item \textup{(U)}\/~is fulfilled if \qquad $I\quad \models\quad \bigwedge\nolimits_{i\in \agents}(\lnot \B{i}{\truestart[]} \to \lnot\truefire[i])$.
\item Both\/~\textup{(U)}\/~and~\/~\textup{(R)}\/~are fulfilled if 
\begin{equation}
\label{eq:sufFRR}
 I \models \bigwedge\limits_{i\in \agents}\Bigl(\Bigl(\lnot \B{i}{\left(
\truestart[] \wedge {\CH{}{\truestart[]}}\right)
} \to \lnot\truefire[i]\Bigr) \land \Bigl(\B{i}{\left(
\truestart[] \wedge {\CH{}{\truestart[]}}\right)
} \to \eventually{(\correct{i}\to\truefire[i]})\Bigr)\Bigr).
\end{equation}
\end{enumerate}
\end{theorem}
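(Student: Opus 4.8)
The plan is to prove both parts almost purely by modal bookkeeping; the only semantic facts used are the truth axiom for $K_i$ (reflexivity of the indistinguishability relation) and a single unfolding of the eventual-common-hope fixpoint. I will repeatedly use the validity $\correct{i}\wedge\B{i}{\psi}\to\psi$: since $\B{i}{\psi}=K_i(\correct{i}\to\psi)$ and $K_i$ satisfies the truth axiom, $\B{i}{\psi}$ yields $\correct{i}\to\psi$, which with $\correct{i}$ gives $\psi$; recall also that $\truefire[i]$ carries $\correct{i}$ as a conjunct by Definition~\ref{def:atomicpropsFR}. For part~1, I would fix $(r,t)$ with $(I,r,t)\models\truefire[]$ and choose $i$ with $(I,r,t)\models\truefire[i]$. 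The contrapositive of the hypothesis reads $\truefire[i]\to\B{i}{\truestart[]}$, and since $\truefire[i]$ supplies $\correct{i}$, the displayed validity gives $(I,r,t)\models\truestart[]$; disjoining over $i$ yields $\truefire[]\to\truestart[]$, i.e.\ \textup{(U)}.

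For part~2, \textup{(U)} is inherited from part~1: by monotonicity of $\B{i}{}$ the first conjunct of~\eqref{eq:sufFRR} entails $\truefire[i]\to\B{i}{\truestart[]}$, which is exactly the hypothesis of part~1. The substance is \textup{(R)}. I would take $(r,t)$ with $(I,r,t)\models\truefire[i]$ for some $i$; the first conjunct then gives $\B{i}{(\truestart[]\wedge\CH{}{\truestart[]})}$, and as $\correct{i}$ holds the truth-axiom validity yields $(I,r,t)\models\truestart[]\wedge\CH{}{\truestart[]}$, hence $(I,r,t)\models\CH{}{\truestart[]}$. Unfolding the fixpoint once via~\eqref{eq:fixpointaxiom} gives $(I,r,t)\models E^{\eventually H}(\truestart[]\wedge\CH{}{\truestart[]})$; concretely, for each $j\in\agents$ there is a time $t_j\ge t$ with $(I,r,t_j)\models H_j(\truestart[]\wedge\CH{}{\truestart[]})$.

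The closing step is a case split at each $t_j$. If $j$ is faulty at $t_j$, then $\correct{j}\to\truefire[j]$ holds vacuously there, so $\eventually{(\correct{j}\to\truefire[j])}$ holds at $(r,t)$ because $t_j\ge t$. If $j$ is correct at $t_j$, then $H_j(\truestart[]\wedge\CH{}{\truestart[]})$ collapses to $\B{j}{(\truestart[]\wedge\CH{}{\truestart[]})}$, and the second conjunct of~\eqref{eq:sufFRR} delivers $\eventually{(\correct{j}\to\truefire[j])}$ at $(r,t_j)$, hence at $(r,t)$ since an eventuality true at a later time is true at an earlier one. Conjoining over all $j$ establishes $\bigwedge_{j\in\agents}\eventually{(\correct{j}\to\truefire[j])}$ at $(r,t)$, which is \textup{(R)}.

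The main thing to get right --- and the nearest thing to an obstacle --- is recognizing that one application of the fixpoint identity~\eqref{eq:fixpointaxiom} suffices, so the induction rule~\eqref{eq:inductionrule} is never needed. This works precisely because the content whose hope triggers each agent, namely $\truestart[]\wedge\CH{}{\truestart[]}$, already contains the common-hope formula as a conjunct: the moment an agent's hope of it materializes while that agent is correct, the firing trigger in the second conjunct of~\eqref{eq:sufFRR} applies directly. The rest is purely the bookkeeping of eventualities ($t'\ge t_j\ge t$) together with routing agents that are faulty at $t_j$ through the vacuous branch of the split.
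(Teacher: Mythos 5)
Your proof is correct and follows essentially the same route as the paper's: derive $\truefire[i]\to\correct{i}$ and use $\correct{i}\to(\B{i}{\varphi}\to\varphi)$ to extract $\truestart[]$ (for (U)) and $\CH{}{\truestart[]}$ (for (R)), unfold the fixpoint once via~\eqref{eq:fixpointaxiom}, and close with the second conjunct of~\eqref{eq:sufFRR}. Your final case split on whether $j$ is correct at $t_j$ is just the semantic rendering of the temporal validity $\eventually{\bigl(\varphi \to \eventually{(\varphi \to \psi)}\bigr)}\to \eventually{(\varphi \to \psi)}$ that the paper invokes syntactically, so the two arguments coincide.
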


\begin{remarque}[Belief in $\truestart$ is not redundant]
Since common knowledge is the strongest type of knowledge and knowledge is supposed to be factive,
it might be tempting to think that  the conjunct $\truestart[]$ is redundant in the formulations of Theorems~\ref{thm:necessaryFR}, \ref{thm:necessarycommonFRR},~and~\ref{thm:necessaryandsufficientFR}.2. The difference in our setting is that the relevant epistemic state is \emph{eventual}, meaning that it need not be factual at present. Still, one might question how it could be possible to achieve even an eventual knowledge/belief/hope without the event actually happening. Indeed, if there is no reason for agents to expect\/ $\START$ to necessarily occur, their predictions about\/ $\START$  occurring  can only rely on it already having occurred. This observation is formalized in Lemma~\ref{lem:earlylocalknow} and the immediately following corollary.
\end{remarque}

\begin{definition}[Potentially persistent formulas]
A formula~$\varphi$ is called \emph{potentially persistent} in an interpreted system $I=(R,\pi)$ if, for any run~$r\in R$ and any time~$t\in\bbbt$ such that\/ $(I,r,t) \models \varphi$, there exists a run~$r' \in R$ such that $r'(t) = r(t)$ --- i.e.,~$r'$~is an alternative continuation of the global state~$r(t)$ --- and such that\/ $(I,r',t) \models \always{\varphi}$. In other words, a true potentially persistent formula can stay true forever.
\end{definition}

The following 3~lemmas follow from definitions  (proofs of the last two are   in the Appendix on p.~\pageref{proof:lemmanineteen}).\looseness=-1
\begin{lemma}
\label{lem:stability}
$I \models  \lnot \correct{i} \to \always{\lnot \correct{i}}$ for any~$i \in \agents$ and  interpreted system~$I$.
\end{lemma}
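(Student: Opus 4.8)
The plan is to reduce the statement to the monotonicity of the faultiness record in time, which is already built into the semantics of the designated proposition $\correct{i}$. First I would unfold the modal operator: by the semantics of $\Box$, establishing $(I,r,t)\models\lnot\correct{i}\to\always{\lnot\correct{i}}$ at an arbitrary point amounts to showing that whenever $(I,r,t)\models\lnot\correct{i}$, we also have $(I,r,t')\models\lnot\correct{i}$ for every $t'\ge t$. Since $\to$ is material implication, the only nontrivial case is the one in which the antecedent holds, so I would fix $r\in R$ and $t\in\bbbt$ and assume $(I,r,t)\models\lnot\correct{i}$.

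Next I would appeal to the defining clause for the designated proposition: $\correct{i}$ is true at $(r,t)$ precisely when no faulty event has happened to $i$ by time $t$. Hence the assumption means that some faulty event \emph{did} happen to $i$ in $r$ by time $t$. The crucial step is then to observe that the collection of faulty events recorded ``by time $t$'' is non-decreasing in $t$: because the global state updates every round to faithfully include all events that happened, and nothing recorded in it is ever removed, any faulty event present by time $t$ is still present by time $t'$ for every $t'\ge t$. Consequently $(I,r,t')\models\lnot\correct{i}$ for all such $t'$, which is exactly $(I,r,t)\models\always{\lnot\correct{i}}$. As $r$ and $t$ were arbitrary, the formula is valid in $I$.

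The main (and essentially only) obstacle is making the monotonicity step precise rather than treating it as routine: everything hinges on the invariant that the global state accumulates faulty events monotonically, so that ``a faulty event happened by time $t$'' is a stable property of the run. This is guaranteed by the modeling assumption that the global state never forgets recorded events (faithful recording every round). Once this invariant is invoked, the remainder uses nothing beyond the semantics of $\Box$ and of material implication, so I expect the formal write-up to be very short.
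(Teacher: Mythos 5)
Your proposal is correct and matches the paper's own treatment: the paper gives no explicit proof, stating only that this lemma ``follows from definitions,'' and your argument is exactly that definitional unfolding --- once a faulty event has happened to $i$ by time $t$, it has trivially happened by every $t'\ge t$, so $\lnot\correct{i}$ is stable. Your extra appeal to the global state faithfully accumulating events is a harmless elaboration of the same monotonicity that the semantics of $\correct{i}$ builds in directly.
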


\begin{lemma}
\label{lem:resil}
$I \models K_i \eventually \lnot \varphi \to K_i \lnot\varphi$ for any~$i \in \agents$ and $\varphi$~potentially persistent in an interpreted system~$I$.
\end{lemma}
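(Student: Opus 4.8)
The plan is to prove the validity directly by unfolding the semantics of~$K_i$ and exploiting potential persistence to derive a contradiction. Fix an interpreted system~$I=(R,\pi)$, an agent~$i \in \agents$, and a formula~$\varphi$ that is potentially persistent in~$I$. I would take an arbitrary run~$r \in R$ and time~$t \in \bbbt$ with $(I,r,t) \models K_i \eventually{\lnot\varphi}$; the goal is to show $(I,r,t) \models K_i \lnot\varphi$, i.e.,~that $(I,r',t') \models \lnot\varphi$ for every~$r' \in R$ and~$t' \in \bbbt$ indistinguishable to~$i$, namely satisfying $r_i(t) = r'_i(t')$.

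First I would fix such an indistinguishable pair~$(r',t')$ and argue by contradiction, assuming $(I,r',t') \models \varphi$. Applying potential persistence of~$\varphi$ at the point~$(r',t')$ yields a run~$r'' \in R$ with the same global state $r''(t') = r'(t')$ and satisfying $(I,r'',t') \models \always{\varphi}$. The crucial observation is that agreement on the entire global state entails agreement on agent~$i$'s local state, so $r''_i(t') = r'_i(t') = r_i(t)$; hence $(r'',t')$ is also indistinguishable from~$(r,t)$ for~$i$.

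Now I would transport the hypothesis along this indistinguishability: since $(I,r,t) \models K_i \eventually{\lnot\varphi}$ and $r_i(t) = r''_i(t')$, the semantics of~$K_i$ give $(I,r'',t') \models \eventually{\lnot\varphi}$, i.e.,~there is some~$s \ge t'$ with $(I,r'',s) \models \lnot\varphi$. This directly contradicts $(I,r'',t') \models \always{\varphi}$, which by the definition $\always{\varphi} \ce \lnot\eventually{\lnot\varphi}$ asserts that $\lnot\varphi$ holds at no time~$s \ge t'$. The contradiction refutes the assumption $(I,r',t') \models \varphi$, so $(I,r',t') \models \lnot\varphi$ as required; since~$(r',t')$ was an arbitrary $i$-indistinguishable point, $(I,r,t) \models K_i\lnot\varphi$ follows, completing the proof.

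The argument is essentially a one-step chase through the definitions, so I expect no genuine obstacle; the only point requiring care is that potential persistence must be invoked with the full global-state equality~$r''(t') = r'(t')$ --- local-state agreement alone would not justify the existence of the persistent continuation --- while it is precisely the induced local-state agreement~$r''_i(t') = r_i(t)$ that lets us re-apply the knowledge hypothesis and reach the contradiction. In effect, potential persistence is exactly the bridge converting ``$\varphi$ could hold now'' into ``$\varphi$ could hold forever in an $i$-indistinguishable run,'' which is incompatible with $i$ knowing that $\varphi$ eventually fails.
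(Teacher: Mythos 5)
Your proof is correct and is essentially the paper's own argument: the paper proves the contrapositive (from $\lnot K_i\lnot\varphi$ it exhibits, via potential persistence, an $i$-indistinguishable point satisfying $\always{\varphi}$, refuting $K_i\eventually{\lnot\varphi}$), whereas you run the same construction as a direct proof by contradiction. The key steps --- invoking potential persistence at an indistinguishable point where $\varphi$ holds, and observing that global-state equality $r''(t')=r'(t')$ yields the local-state equality $r''_i(t')=r_i(t)$ needed to re-apply the knowledge hypothesis --- coincide exactly.
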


\begin{lemma}
\label{lem:beldia_knowdia}
$
I  \models  B_i\eventually{(\correct{i} \to \varphi)} \leftrightarrow K_i\eventually{(\correct{i} \to \varphi)}$\/  for any~$i \in \agents$, formula~$\varphi$, and  interpreted system~$I$, i.e.,~believing something eventually happens modulo one's own correctness is as strong as knowing it eventually happens modulo one's own correctness.
\end{lemma}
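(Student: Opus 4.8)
The plan is to reduce the biconditional to a single purely temporal--propositional validity about the formula $\eventually{(\correct{i} \to \varphi)}$ and then push the knowledge operator $\K{i}{}$ through it, using the fact that $\K{i}{}$ respects provable equivalences. Concretely, I would first establish
\[
I \quad\models\quad \eventually{(\correct{i} \to \varphi)} \;\leftrightarrow\; \bigl(\correct{i} \to \eventually{(\correct{i} \to \varphi)}\bigr),
\]
and then observe that, since $\B{i}{\psi}$ is by definition $\K{i}{(\correct{i} \to \psi)}$, applying $\K{i}{}$ to both sides of this equivalence with $\psi$ taken to be $\eventually{(\correct{i} \to \varphi)}$ yields exactly the claimed equivalence between $\K{i}{\eventually{(\correct{i} \to \varphi)}}$ and $\B{i}{\eventually{(\correct{i} \to \varphi)}}$.

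For the intermediate validity I would argue pointwise. The left-to-right implication is mere weakening: if $\eventually{(\correct{i} \to \varphi)}$ holds at a point $(r,t)$, then so does $\correct{i} \to \eventually{(\correct{i} \to \varphi)}$. For the converse, suppose $(I,r,t) \models \correct{i} \to \eventually{(\correct{i} \to \varphi)}$ and split on the truth of $\correct{i}$ at $(r,t)$. If $(I,r,t) \models \correct{i}$, then the consequent $\eventually{(\correct{i} \to \varphi)}$ follows directly by modus ponens. If instead $(I,r,t) \models \lnot\correct{i}$, then the material implication $\correct{i} \to \varphi$ is vacuously true at $(r,t)$ itself, so $\eventually{(\correct{i} \to \varphi)}$ holds at $(r,t)$ with the present instant $t$ as witness. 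Either way the consequent holds, which gives the validity.

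Finally, I would invoke the standard congruence rule for the normal modality $\K{i}{}$: from $I \models \alpha \leftrightarrow \beta$ one obtains $I \models \K{i}{\alpha} \leftrightarrow \K{i}{\beta}$, which is immediate from the semantics of $\K{i}{}$, since $\alpha$ and $\beta$ then hold at exactly the same points while $\K{i}{}$ quantifies only over $i$-indistinguishable points. Applying this to the intermediate validity and unfolding the definition of $\B{i}{}$ closes the argument. The only step carrying any content is the converse direction of the intermediate validity, and its sole subtlety is the observation that at a point where $i$ is faulty the guard $\correct{i} \to \varphi$ is satisfied immediately and vacuously; notably, this argument makes no appeal to the stability of faultiness (Lemma~\ref{lem:stability}), so that lemma is not actually needed here.
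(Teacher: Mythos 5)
Your proof is correct, and it takes a genuinely different --- and more economical --- route than the paper's. The paper proves the nontrivial direction ($B_i$ to $K_i$) by an internalized case split: from the tautology $\lnot\correct{i}\to(\correct{i}\to\varphi)$ it derives $I\models\always{\lnot\correct{i}}\to\always{(\correct{i}\to\varphi)}$, invokes Lemma~\ref{lem:stability} to upgrade present faultiness to $\always{\lnot\correct{i}}$, uses seriality ($I\models\always{\psi}\to\eventually{\psi}$) and knowledge necessitation to obtain $I\models \K{i}{\bigl(\lnot\correct{i}\to\eventually{(\correct{i}\to\varphi)}\bigr)}$, and then combines this with the hypothesis $\K{i}{\bigl(\correct{i}\to\eventually{(\correct{i}\to\varphi)}\bigr)}$ by epistemically internalized propositional reasoning; the direction from $K_i$ to $B_i$ is dispatched separately as trivial. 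You instead compress the same case split into the single temporal--propositional validity $I\models\eventually{(\correct{i}\to\varphi)}\leftrightarrow\bigl(\correct{i}\to\eventually{(\correct{i}\to\varphi)}\bigr)$ and push $\K{i}{}$ through it by the congruence rule, which yields both directions of the biconditional in one step. Your closing observation is accurate and marks the real divergence: because the paper's semantics makes $\eventually$ reflexive (the witness ranges over $t'\ge t$), a presently faulty agent satisfies the guard $\correct{i}\to\varphi$ vacuously at the current instant, so the trivial validity $\psi\to\eventually{\psi}$ suffices where the paper routes through $\always{\lnot\correct{i}}$ --- Lemma~\ref{lem:stability} is indeed dispensable here. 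What the paper's detour buys is robustness: under a strict-future reading of $\eventually$ the present-instant witness disappears and the stability of faultiness becomes essential to the argument, whereas your proof is tied to reflexivity. Conversely, your version is shorter, isolates the entire content in one propositional--temporal fact about the guard $\correct{i}$, and makes explicit that the epistemic part is nothing more than a generic congruence property of $\K{i}{}$ with respect to validities of the system.
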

\begin{corollary}
\label{cor:beldiahop_knowdiahop}
$I \models  B_i\eventually{H_i\varphi} \leftrightarrow K_i\eventually{H_i \varphi}$ for any~$i \in \agents$, formula~$\varphi$, and  interpreted system~$I$.
\end{corollary}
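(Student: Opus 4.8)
The plan is to obtain the corollary as a direct instance of Lemma~\ref{lem:beldia_knowdia}, with no new work beyond a substitution. First I would unfold the definition of hope, recalling that $H_i\varphi \ce \correct{i} \to B_i\varphi$. The key observation is that this makes $H_i\varphi$ syntactically of the shape $\correct{i} \to \psi$ for the specific formula $\psi \ce B_i\varphi$, which is exactly the template appearing in Lemma~\ref{lem:beldia_knowdia}.

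Since that lemma is stated for an \emph{arbitrary} formula (there playing the role of the antecedent's consequent), I would instantiate it with $B_i\varphi$ in place of that formula. This immediately yields
\[I \quad\models\quad B_i\eventually{(\correct{i} \to B_i\varphi)} \leftrightarrow K_i\eventually{(\correct{i} \to B_i\varphi)}.\]
Rewriting $\correct{i} \to B_i\varphi$ back as $H_i\varphi$ on both sides of the biconditional gives precisely the claimed equivalence $I \models B_i\eventually{H_i\varphi} \leftrightarrow K_i\eventually{H_i\varphi}$, which completes the argument.

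I do not expect a genuine obstacle here: the only thing to check is that $H_i\varphi$ really does match the form $\correct{i} \to \psi$ required by the lemma, and this is immediate from the definition of the hope modality. In particular, the substantive content --- that believing something will eventually hold modulo one's own correctness is no weaker than knowing it will, which itself rests on the stability of $\lnot\correct{i}$ (Lemma~\ref{lem:stability}) together with the persistence argument behind Lemma~\ref{lem:resil} --- is entirely encapsulated in Lemma~\ref{lem:beldia_knowdia} and need not be revisited. The corollary is thus a purely formal specialization.
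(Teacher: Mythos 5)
Your proof is correct and matches the paper's intent exactly: the paper states Corollary~\ref{cor:beldiahop_knowdiahop} without a separate proof precisely because it is the instantiation of Lemma~\ref{lem:beldia_knowdia} with $B_i\varphi$ in place of the arbitrary formula, using $H_i\varphi \ce \correct{i} \to B_i\varphi$. Nothing is missing, and your remark that the substantive content (stability of $\lnot\correct{i}$ and the persistence argument) is already encapsulated in the lemma is accurate.
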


\begin{lemma}[Early local belief]
\label{lem:earlylocalknow}
If $\correct{i} \land\lnot\truestart$ is potentially persistent in an interpreted system~$I$, 
\begin{equation*}
I\qquad \models \qquad B_i \eventually{H_i\truestart[]} \to B_i \truestart.
\end{equation*}
\end{lemma}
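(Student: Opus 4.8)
The plan is to chain the hypothesis to the conclusion through a short sequence of modal manipulations, drawing on the two results not yet exploited in the surrounding development: Corollary~\ref{cor:beldiahop_knowdiahop} and Lemma~\ref{lem:resil}. The key fact to establish first is a pointwise \emph{collapse} of hope into a plain conditional,
\[
I \quad\models\quad H_i\truestart[] \to (\correct{i} \to \truestart[]),
\]
which follows purely by unfolding definitions. Indeed, assuming $\correct{i}$ together with $H_i\truestart[] = \correct{i} \to B_i\truestart[]$ yields $B_i\truestart[] = K_i(\correct{i} \to \truestart[])$ by modus ponens; the truth axiom for~$K_i$ (valid since the indistinguishability relation $r_i(t)=r'_i(t')$ is reflexive) then gives $\correct{i} \to \truestart[]$, and a final use of $\correct{i}$ delivers $\truestart[]$.

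Next I would propagate this implication through the temporal and epistemic modalities. Since $\eventually{}$ preserves $I$-valid implications, the collapse fact lifts to $I \models \eventually{H_i\truestart[]} \to \eventually{(\correct{i} \to \truestart[])}$, and monotonicity of~$K_i$ upgrades this to $I \models K_i\eventually{H_i\truestart[]} \to K_i\eventually{(\correct{i} \to \truestart[])}$. Corollary~\ref{cor:beldiahop_knowdiahop} then lets me rewrite the antecedent in the belief form that actually occurs in the statement, producing
\[
I \quad\models\quad B_i\eventually{H_i\truestart[]} \to K_i\eventually{(\correct{i} \to \truestart[])}.
\]

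The last step is the only one invoking potential persistence. Noting that $\correct{i} \to \truestart[]$ is propositionally equivalent to $\lnot(\correct{i} \land \lnot\truestart)$, the consequent above is exactly $K_i\eventually{\lnot(\correct{i} \land \lnot\truestart)}$. Since $\correct{i} \land \lnot\truestart$ is potentially persistent by hypothesis, Lemma~\ref{lem:resil} (applied with $\varphi \ce \correct{i} \land \lnot\truestart$) converts this into $K_i\lnot(\correct{i} \land \lnot\truestart) = K_i(\correct{i} \to \truestart[]) = B_i\truestart[]$. Concatenating the three implications yields the desired $I \models B_i\eventually{H_i\truestart[]} \to B_i\truestart[]$.

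I expect the genuinely delicate point to be the first step: matching the two nested conditionals in the definitions of hope and belief against the factivity of~$K_i$ requires care, since it is precisely here that eventual hope is shown to have no force beyond the plain conditional once correctness is assumed. Everything afterwards is routine modal bookkeeping --- monotonicity of $\eventually{}$ and $K_i$ --- together with the two cited results, so that the entire substantive content of persistence is outsourced to Lemma~\ref{lem:resil}.
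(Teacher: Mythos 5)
Your proof is correct and takes essentially the same route as the paper's: both arguments collapse $H_i\truestart[]$ to $\correct{i}\to\truestart[]$ via factivity of~$K_i$, lift the implication through $\eventually{}$~and~$K_i$ by monotonicity, pass between $B_i$~and~$K_i$ using Corollary~\ref{cor:beldiahop_knowdiahop}, and discharge the remaining $\eventually{}$ via Lemma~\ref{lem:resil} applied to the potentially persistent formula $\correct{i}\land\lnot\truestart[]$. The only deviations --- performing the collapse pointwise before invoking the corollary rather than after, and internalizing factivity under $K_i\eventually{}$ in one step versus two --- are cosmetic reorderings of the same modal bookkeeping.
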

\begin{proof}
A proof can be found in the Appendix on p.~\pageref{proof:lemmatwentythree}.
\end{proof}

Noting that $I \models \CH{}{\truestart[]} \to E^{\eventually H}{\truestart[]}$ because of the normality of the hope modality~\cite{Fru19:ESSLLI}, we can derive:
\begin{corollary}
If $\correct{i} \land \lnot \truestart$ is potentially persistent in an interpreted system~$I$, then
\begin{gather}
\label{eq:earlyknow_mut}
I \qquad \models \qquad B_i E^{\eventually H}{\truestart[]}\to B_i \truestart,
\\
\label{eq:earlyknow_com}
I \qquad \models \qquad B_i \CH{}{\truestart[]}\to B_i \truestart.
\end{gather}
\end{corollary}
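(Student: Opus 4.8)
The plan is to derive both implications by chaining the already-established Early Local Belief lemma (Lemma~\ref{lem:earlylocalknow}) with elementary projections of the conjunction/fixed point involved and with the monotonicity (normality) of the belief modality~$B_i$. Recall that $B_i\varphi \ce K_i(\correct{i}\to\varphi)$, so whenever $I\models\varphi\to\psi$ we also have $I\models B_i\varphi\to B_i\psi$: indeed, $\varphi\to\psi$ valid implies $(\correct{i}\to\varphi)\to(\correct{i}\to\psi)$ valid, and applying the normal operator~$K_i$ preserves this. This single fact is what moves each validity ``under''~$B_i$.

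For \eqref{eq:earlyknow_mut}, first I would observe that, because $i\in\agents$, the $i$-th conjunct of $E^{\eventually H}\truestart[]=\bigwedge_{j\in\agents}\eventually H_j\truestart[]$ gives the trivial projection $I\models E^{\eventually H}\truestart[]\to\eventually H_i\truestart[]$. Pushing this under~$B_i$ by monotonicity yields $I\models B_i E^{\eventually H}\truestart[]\to B_i\eventually H_i\truestart[]$. Now the hypothesis that $\correct{i}\land\lnot\truestart$ is potentially persistent is exactly what licenses Lemma~\ref{lem:earlylocalknow}, namely $I\models B_i\eventually H_i\truestart[]\to B_i\truestart$. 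Composing the two implications gives \eqref{eq:earlyknow_mut}.

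For \eqref{eq:earlyknow_com}, I would start from the validity $I\models\CH{}{\truestart[]}\to E^{\eventually H}\truestart[]$ noted just before the corollary, which is a consequence of the normality of hope: the fixed-point unfolding \eqref{eq:fixpointaxiom} gives $\CH{}{\truestart[]}\to E^{\eventually H}(\truestart[]\land\CH{}{\truestart[]})$, from which the conjunct $\CH{}{\truestart[]}$ may be discarded inside each $\eventually H_j$ by monotonicity, leaving $E^{\eventually H}\truestart[]$. Pushing $I\models\CH{}{\truestart[]}\to E^{\eventually H}\truestart[]$ under~$B_i$ by monotonicity then gives $I\models B_i\CH{}{\truestart[]}\to B_i E^{\eventually H}\truestart[]$, and composing with \eqref{eq:earlyknow_mut} finishes the job.

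Since every step is either a propositional projection, an application of monotonicity of~$B_i$, or a direct appeal to Lemma~\ref{lem:earlylocalknow}, there is essentially no genuine obstacle here. The only points requiring any care are making sure the potential-persistence hypothesis is carried over unchanged so that Lemma~\ref{lem:earlylocalknow} applies to the same agent~$i$, and confirming that in passing from common to mutual hope the extra conjunct $\CH{}{\truestart[]}$ really may be discarded inside $E^{\eventually H}$, which is precisely where the normality of the hope modality is used.
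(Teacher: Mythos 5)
Your proposal is correct and coincides with the paper's own (implicit) derivation: the paper likewise obtains the corollary by chaining Lemma~\ref{lem:earlylocalknow} with the trivial projection $E^{\eventually H}{\truestart[]} \to \eventually{H_i\truestart[]}$ and with the noted validity $\CH{}{\truestart[]} \to E^{\eventually H}{\truestart[]}$ (a consequence of the normality of hope via the fixed-point axiom~\eqref{eq:fixpointaxiom}), pushing everything under~$B_i$ by monotonicity. Your care about discarding the conjunct $\CH{}{\truestart[]}$ inside $E^{\eventually H}$ is exactly the point the paper attributes to normality of the hope modality, so nothing is missing.
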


\begin{remarque}[Conditions on dropping~$B_i\truestart$]
\label{rem:nomuteruns}
If, contrary to the conditions of the early local belief lemma, $\truestart$ is inevitable, agents    \emph{may} be able to predict the eventual arrival of\/ $\START$ before the fact. For instance, if\/ $\START$ eventually happens to every agent, i.e.,~if $I \models \eventually{\trueoccurred[i]{\START}}$, then $I \models \eventually{B_i\truestart[]}$ for every agent~$i \in \agents$. 
It follows that $I \models C \CH{}{\truestart[]}$, i.e.,~it is common knowledge, from the very beginning, that there is eventual common hope of $\truestart[]$.  Thus, this state of knowledge is achieved independently of\/ $\START$ happening, and triggering\/ $\FIRE$ risks violating Unforgeability\/~\textup{(U)}.

On the other hand, even if\/ $\START$ is assured, it may not always be possible to predict it in advance. While sufficient for dropping the conjunct $\truestart$ from the conditions triggering\/ $\FIRE$ in Theorem~\ref{thm:necessaryandsufficientFR}.2, the potential persistency of $\correct{i} \land \lnot \truestart$ is  not necessary. Indeed, \eqref{eq:earlyknow_com}~can hold even when\/ $\START$ is always guaranteed to happen. For instance, in an interpreted system  where\/ $\START$ happens exactly once per run, no agent ever becomes faulty, and, in addition, agents never communicate, $I \models \lnot B_iE^{\eventually H}{\truestart[]}$ because only the agent who observed\/ $\START$ can learn that it already occurred. All the others can only be sure that\/ $\START$ will occur eventually. 
By~\eqref{eq:fixpointaxiom}, also $I \models \lnot B_iC^{\eventually H}{\truestart[]}$. Thus,  both implications~\eqref{eq:earlyknow_mut}~and~\eqref{eq:earlyknow_com} are vacuously true, allowing to drop $\truestart$ without affecting the behavior of agents, though admittedly in such interpreted systems agents should never fire anyways.
\end{remarque}

The following ``Lifting Lemma'' shows that Correctness~(C) lifts eventual mutual hope to eventual common hope. This way, the arbitrarily deep nested hope implied by the latter effectively collapses, a phenomenon that has also been reported for other problems~\cite{BM10:DISC}. A proof of the lemma can be found in the Appendix on p.~\pageref{proof:lemmatwentysix}.

\begin{lemma}[Lifting Lemma]\label{lem:allknow}
Let $I$~be an interpreted system consistent with\/~\textup{(C)} and  let\/ $|\agents| \geq 3f+1$, where $f\geq 0$~is the maximum number of byzantine faulty agents in a run. Furthermore, assume that
\begin{equation}
\label{eq:nec_UR_hope}
I \qquad\models\qquad \truefire[i] \to  B_i\left(\truestart[] \wedge {E^{\eventually H}\truestart[]}\right)
\end{equation}
holds. Then,  
\begin{equation}
\label{eq:lifting_main}
I  \qquad \models \qquad E^{\eventually H}_{}{\truestart[]} \to {\CH{}{\truestart[]}}.
\end{equation}
\end{lemma}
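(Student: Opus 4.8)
The plan is to obtain $\CH{}{\truestart[]}$ coinductively from its fixpoint characterization, driving the argument with Correctness~(C) and the necessary-knowledge hypothesis~\eqref{eq:nec_UR_hope} as the two engines. Concretely, I would instantiate the induction rule~\eqref{eq:inductionrule} with $\psi \ce E^{\eventually H}\truestart[]$ and $\varphi \ce \truestart[]$, so that it suffices to establish
\[
I \quad\models\quad E^{\eventually H}\truestart[] \to E^{\eventually H}\bigl(\truestart[] \wedge E^{\eventually H}\truestart[]\bigr),
\]
after which \eqref{eq:inductionrule} delivers $E^{\eventually H}\truestart[] \to \CH{}{\truestart[]}$ at once. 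By the normality of hope, the right-hand side unfolds as $\bigwedge_{k\in\agents}\eventually{H_k\bigl(\truestart[] \wedge E^{\eventually H}\truestart[]\bigr)}$, so the lemma reduces to showing, for an arbitrary point $(r,t)$ with $(I,r,t)\models E^{\eventually H}\truestart[]$ and an arbitrary agent~$k$, that $\eventually{H_k\bigl(\truestart[] \wedge E^{\eventually H}\truestart[]\bigr)}$ holds there.

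The engine for this is to force \emph{every} agent to fire (or fail) and then read off the required hope from~\eqref{eq:nec_UR_hope}. First I would invoke the cardinality bound: since at most~$f$ agents are faulty in~$r$ while $|\agents|\geq 3f+1$, at least $2f+1$ agents are correct throughout~$r$; fix such a set~$G$ with $|G| = 2f+1$. For each $j\in G$, the assumption $E^{\eventually H}\truestart[]$ at $(r,t)$ yields some $t_j \geq t$ with $(I,r,t_j)\models H_j\truestart[]$, and since $j$ is correct there, $H_j\truestart[]$ collapses to $B_j\truestart[] = K_j(\correct{j}\to\truestart[])$. Synchronizing these beliefs to the common time $t^* \ce \max_{j\in G} t_j$ gives $\bigwedge_{j\in G} K_j(\correct{j}\to\truestart[])$ at $(r,t^*)$, which is precisely the $G$-disjunct in the antecedent of Correctness~(C). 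Applying~(C) at $(r,t^*)$ then yields $\bigwedge_{i\in\agents}\eventually{(\correct{i}\to\truefire[i])}$ at that point.

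It remains to convert firing into hope. For each agent~$k$, choose $s_k \geq t^*$ with $(I,r,s_k)\models \correct{k}\to\truefire[k]$. If $\correct{k}$ holds at $s_k$, then $\truefire[k]$ holds, so~\eqref{eq:nec_UR_hope} gives $B_k\bigl(\truestart[]\wedge E^{\eventually H}\truestart[]\bigr)$, and since $B_k\theta \to H_k\theta$ is valid for every~$\theta$, we obtain $H_k\bigl(\truestart[]\wedge E^{\eventually H}\truestart[]\bigr)$ at $s_k$; if $\correct{k}$ fails at $s_k$, then $H_k$ of anything holds vacuously. Either way $H_k\bigl(\truestart[]\wedge E^{\eventually H}\truestart[]\bigr)$ holds at some $s_k \geq t^* \geq t$, whence $\eventually{H_k\bigl(\truestart[]\wedge E^{\eventually H}\truestart[]\bigr)}$ holds at $(r,t)$. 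As $k$ and $(r,t)$ were arbitrary, the displayed implication is valid in~$I$, and coinduction completes the proof.

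The step I expect to be the main obstacle is the synchronization of the beliefs $B_j\truestart[]$ to a single time~$t^*$, i.e.,~that for a forever-correct agent~$j$ the belief $B_j\truestart[]$, once true, stays true. This is delicate because $\truestart[]$ is itself \emph{not} stable: a witnessing disjunct $\truestart[k] = \Y{\trueoccurred[k]{\START}}\wedge\correct{k}$ can be falsified when the observer~$k$ later turns faulty. The resolution is perfect recall: a forever-correct agent never forgets the accurately recorded perceptions underlying its belief, so a belief grounded in its recalled record of~$\START$ cannot be retracted while the agent remains correct. I would therefore isolate this persistence as a separate stability lemma (appealing to Lemma~\ref{lem:stability} for the stability of faultiness and to perfect recall for the persistence of accurate records) and use it to license evaluating all $2f+1$ beliefs at the common time~$t^*$.
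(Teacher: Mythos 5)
Your proposal is correct and follows essentially the same route as the paper's own proof: the same instantiation of the induction rule~\eqref{eq:inductionrule} with $\psi = E^{\eventually H}\truestart[]$ and $\varphi = \truestart[]$, the same group~$G$ of $2f+1$ forever-correct agents whose hopes collapse to beliefs and are synchronized at $t^* = \max_{j\in G} t_j$ to trigger the antecedent of Correctness~(C), and the same conversion of $\eventually{(\correct{k}\to\truefire[k])}$ into $\eventually{H_k\left(\truestart[] \wedge E^{\eventually H}\truestart[]\right)}$ via~\eqref{eq:nec_UR_hope}. The synchronization obstacle you single out is precisely the paper's claim~\eqref{eq:lifting}, which it settles by the same mechanics you sketch: perfect recall yields, inside any point $j$-indistinguishable from $(r,t^*)$, an earlier point $j$-indistinguishable from $(r,t_j)$ at which $\correct{j}\to\truestart[]$ holds, and the stability of that implication transports it forward to the later time.
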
 
\begin{corollary}
\label{cor:relay_through_correctness}
Let $I$~be an interpreted system with at least $3f+1$~agents.
If  $I$~is  consistent with~FRR, then~\eqref{eq:lifting_main} holds. 
\end{corollary}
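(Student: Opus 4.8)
The plan is to recognize this corollary as a direct specialization of the Lifting Lemma (Lemma~\ref{lem:allknow}): I would simply check that consistency with~FRR, together with the stated bound on the number of agents, supplies all three hypotheses of that lemma, and then invoke it. Recall that, by Definition~\ref{def:FRmodel}, an interpreted system consistent with~FRR satisfies all of Correctness~\textup{(C)}, Unforgeability~\textup{(U)}, and Relay~\textup{(R)} simultaneously. I would unpack this at the outset so that each of the three conditions is available to be matched against a corresponding hypothesis of the Lifting Lemma.

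Two of the three hypotheses are then immediate. The cardinality requirement $|\agents| \ge 3f+1$ is exactly the assumption placed on~$I$ in the statement, and consistency with Correctness~\textup{(C)} is one of the three components just extracted from FRR-consistency. The only remaining hypothesis is the firing-necessity condition~\eqref{eq:nec_UR_hope}, namely $I \models \truefire[i] \to B_i(\truestart[] \land E^{\eventually H}\truestart[])$ for every agent~$i$. This is precisely the conclusion of Theorem~\ref{thm:necessaryFR}, whose only assumption is consistency with~\textup{(U)}~and~\textup{(R)}; since FRR-consistency delivers both, that theorem applies verbatim and yields~\eqref{eq:nec_UR_hope}.

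Having verified all three hypotheses, I would conclude by applying the Lifting Lemma, which immediately gives the desired implication~\eqref{eq:lifting_main}, $I \models E^{\eventually H}\truestart[] \to \CH{}{\truestart[]}$. I do not expect any genuine obstacle here: the corollary is a bookkeeping assembly of earlier results, and all the substantive work---propagating eventual mutual hope across the at least $2f+1$ correct agents guaranteed by the $3f+1$ bound (of which at most~$f$ are faulty), and closing it up into a greatest fixed point via the induction rule~\eqref{eq:inductionrule}---will already have been discharged inside the proof of the Lifting Lemma itself. The one point I would double-check is purely notational: that the maximum fault count~$f$ in the corollary's ``at least $3f+1$~agents'' is interpreted as the same~$f$ bounding the byzantine agents in the Lifting Lemma's $|\agents| \ge 3f+1$, so that the two cardinality hypotheses genuinely coincide and no slack is introduced.
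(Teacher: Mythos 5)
Your proposal is correct and follows essentially the same route as the paper's own proof: the paper likewise extracts \textup{(U)}~and~\textup{(R)} from FRR-consistency, invokes Theorem~\ref{thm:necessaryFR} to obtain hypothesis~\eqref{eq:nec_UR_hope}, and then applies the Lifting Lemma~\ref{lem:allknow} together with \textup{(C)} and the $3f+1$ bound to conclude~\eqref{eq:lifting_main}. Your closing check that the fault parameter~$f$ is the same in both statements is sound and consistent with the paper's usage.
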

\begin{proof}
In interpreted systems consistent with~(U)~and~(R), property~\eqref{eq:nec_UR_hope} follows from Theorem~\ref{thm:necessaryFR}. 
\end{proof}

\section{Conclusions and Future Work}
\label{sec:conclusions}

We introduced a problem called Firing Rebels with Relay~(FRR) and its weaker variant called Firing Rebels~(FR), which capture the essentials of a well-known building block for byzantine fault-tolerant distributed algorithms. The main purpose of our paper was to determine the state of knowledge correct agents must achieve in order to act~($\FIRE$) according to the specification of the problem at hand. Through a detailed epistemic analysis, we established that the necessary and sufficient levels of knowledge required for acting  rely on the novel notion of eventual common hope. We also found the conditions under which a single level of eventual mutual hope can  guarantee infinitely many levels of eventual common hope and explored the surprisingly non-trivial relationship of the eventual common hope of $\START$ with the actual appearance of $\START$.\looseness=-1

\noindent Regarding future work, our next step is to complete the characterization of (eventual)~common hope. More precisely, what remains to be done is developing an independent axiomatization of the (eventual)~common hope modality based on our existing axiomatization of the hope modality (which does not depend on the knowledge modality). In addition,
we are working on identifying necessary and sufficient communication structures and optimal
protocols for~FRR.

\bibliographystyle{eptcs}

\appendix
\section*{Appendix}
 
\begin{proof}[Proof of Lemma~\ref{lem:aux}]
\label{proof:lemmasix}

The argument is the same for $\FIRE$ and $\START$. We only provide it for the former. Let $I = (R,\pi)$. Consider a run $r \in \system{}$ and a node $(i,t) \in \points$. Assume $\kstruct{}{r}{t} \models \truefire[i]$. Since $i$~has perfect recall and this was a correct $\FIRE$ action, it was recorded and still remains in $i$'s~local history~$r(t)$. Consider  any  $r'\in \System$ and  $t' \in \mathbb{N}$ such that $\run{i}{t}=\run[']{i}{t'}$. Then  $\run[']{i}{t'}$~also contains a record of $\FIRE$.
If $\kstruct{}{r'}{t'} \models \correct{i}$, this record must correspond to a correct action and, consequently, $\kstruct{}{r'}{t'} \models \truefire[i]$. Since $\kstruct{}{r'}{t'} \models  \correct{i} \to \truefire[i]$ whenever $\run{i}{t}=\run[']{i}{t'}$, we have $\kstruct{}{r}{t}\models \K{i}{(\correct{i} \rightarrow \truefire[i])}$, i.e.,~$\kstruct{}{r}{t}\models B_i\truefire[i]$. The other statement about FIRE follows from $\models \truefire[i] \to \truefire$ and the monotonicity/normality of~$B_i$.\looseness=-1
\end{proof}

\smallskip

\begin{proof}[Proof of Theorem~\ref{thm:necessaryFR}]
\label{proof:theoremten} 
Since the system is consistent with~(U), \eqref{eq:U:nec}~holds according to Lemma~\ref{lem:U}. Thus, given that $B_i$~is a normal modality, it only remains to show that 
\begin{equation}
\label{eq:nec_FRR_second}
 I\qquad \models\qquad \truefire[i] \to  B_i {E^{\eventually H}\truestart[]}.
 \end{equation}
Since the system is consistent with~(R), \eqref{eq:R:nec}~holds according to Lemma~\ref{lem:R}.  Using the replacement property for positive subformulas and the already discussed validity $I \models \truefire[j] \to B_j \truestart[]$ from~\eqref{eq:U:nec}, we obtain $I \models \truefire[i] \to B_i{\bigwedge\nolimits_{j\in\agents} \eventually{\bigl(\correct{j} \rightarrow B_j\truestart[]\bigr)}}$, in other words,~\eqref{eq:nec_FRR_second}.
\end{proof}

\smallskip

\begin{proof}[Proof of Theorem~\ref{thm:necessarycommonFRR}]
\label{proof:theoremfourteen}
Since \eqref{eq:U:nec}~holds by Lemma~\ref{lem:U}, it is sufficient to demonstrate 
$I\models \truefire[i] \to  B_i{\CH{}{\truestart[]}}$. Combining~(R)~with~\eqref{eq:aux} by applying the replacement property for positive subformulas, we obtain
$
I \models \truefire \to {E^{\eventually H}{\truefire}}
$.
Thus, using~\eqref{eq:inductionrule} with $\varphi=\psi = \truefire$, we conclude $
I \models \truefire \to  \CH{}{\truefire}$.
Since the greatest fixed point of a monotone operator is itself monotone, it follows from~(U) that 
$
I \models \truefire \to  \CH{}{\truestart}$.
It remains to use~\eqref{eq:aux} and monotonicity of~$B_i$ to obtain
$
I \models \truefire[i] \to B_i\CH{}{\truestart}$.
\end{proof}

\smallskip

\begin{proof}[Proof of Theorem~\ref{thm:necessaryandsufficientFR}]
\label{proof:theoremsixteen}
For either assumption,  $I \models \truefire[i] \to \B{i}{\truestart}$. Since 
\begin{equation}
\label{eq:fire_fact}
I \quad\models\quad \truefire[i] \to \correct{i}
\qquad \text{and} \qquad I \quad\models\quad \correct{i} \rightarrow (\B{i}{\varphi}\rightarrow\varphi) \quad \text{for any formula~$\varphi$},
\end{equation}
we have
$I \models \truefire[i] \to \truestart$ for each $i \in\agents$. Since $\truefire$ is  $\bigvee_{i \in \agents} \truefire[i]$,  (U)~holds by propositional reasoning.\looseness=-1

It remains to show that Relay~(R) holds under the assumption of~\eqref{eq:sufFRR}. Once again, it is sufficient to demonstrate that, for each $i \in \agents$,
\begin{equation}
\label{eq:R_what_we_want}
I \qquad\models\qquad \truefire[i] \to \bigwedge\nolimits_{j\in \agents} \eventually (\correct{j}\to \truefire[j]).
\end{equation}
It follows from the first conjunct of~\eqref{eq:sufFRR} that 
$I \models \truefire[i] \to  \B{i}{
 {\CH{}{\truestart[]}}
}$. Using~\eqref{eq:fire_fact} again, we conclude that $I \models \truefire[i] \to  
 {\CH{}{\truestart[]}
}$.
Since $I \models \CH{}{\varphi} \to \bigwedge_{j\in\agents}\eventually{H_j(\varphi \land \CH{}{\varphi})}$ for any formula~$\varphi$ according to~\eqref{eq:fixpointaxiom}, 
\begin{equation}
I \qquad\models\qquad  \truefire[i] \to  \bigwedge\nolimits_{j\in\agents}\eventually{\Bigl(\correct{j} \to \B{j}{\bigl(\truestart[] \wedge {\CH{}{\truestart[]}}}\bigr)\Bigr)}.\label{eq:intermed}
\end{equation}
Using the second conjunct  of~\eqref{eq:sufFRR} and monotonicity of~$B_j$~and~$\eventually{}$ in~\eqref{eq:intermed}, we obtain
\[
I\qquad \models \qquad \truefire[i] \to  \bigwedge\nolimits_{j\in\agents}\eventually{\Bigl(\correct{j} \to \eventually{\bigl(\correct{j}\to\truefire[j]}\bigr)\Bigr)}.
\]
To get~\eqref{eq:R_what_we_want}, it remains to note that
$I \models \eventually{\Bigl(\varphi \to \eventually{(\varphi \to \psi)}\Bigr)}\to \eventually{(\varphi \to \psi)}$ for all formulas~$\varphi$~and~$\psi$.
\end{proof}

\smallskip

\begin{proof}[Proof of Lemma~\ref{lem:resil}]
\label{proof:lemmanineteen}
Let $I=(R,\pi)$. 
Assume that $(I,r,t) \not\models  K_i \lnot \varphi$ for some $r \in R$ and $t \in \bbbt$. Then there exists another run $r'\in R$ and time $t'\in\bbbt$ such that $r_i(t)= r'_i(t')$ and $(I,r',t') \models \varphi$. By the potential persistence of~$\varphi$, there exists an alternative continuation $r''\in R$ of the prefix $r'(t')$ such that $r''(t')=r'(t')$ and $(I,r'',t') \models \always{\varphi}$. Thus, $(I,r'',t') \not\models \eventually{\lnot\varphi}$. It remains to note that $r''_i(t')=r_i'(t')=r_i(t)$. Hence, $(I,r,t) \not \models K_i\eventually{\lnot\varphi}$.
\end{proof}

\smallskip

\begin{proof}[Proof of Lemma~\ref{lem:beldia_knowdia}]
\label{proof:lemmatwentyone}
The right-to-left direction is trivial. Hence, we prove the implication from left to right.
Firstly, $\lnot \correct{i} \to (\correct{i} \to \varphi)$  is a propositional tautology. Hence, 
\[
I \qquad\models\qquad \always{\lnot \correct{i}} \to \always{(\correct{i} \to \varphi)}.
\] 
Using Lemma~\ref{lem:stability}, the fact that  $I \models \always{\psi} \to \eventually{\psi}$ by seriality of temporal modalities, and knowledge necessitation, we obtain 
\[I \qquad\models\qquad K_i\bigl(\lnot \correct{i} \to \eventually{(\correct{i} \to \varphi)}\bigr).
\]
By epistemically internalized propositional reasoning, 
\[
I \quad\models\quad K_i\bigl( \correct{i} \to \eventually{(\correct{i} \to \varphi)}\bigr) \land K_i\bigl(\lnot \correct{i} \to \eventually{(\correct{i} \to \varphi)}\bigr) \to K_i\eventually{(\correct{i} \to \varphi)}.
\]
Since we have just shown the second conjunct above to be valid, we obtain the desired
\[
I \qquad\models\qquad  K_i\bigl( \correct{i} \to \eventually{(\correct{i} \to \varphi)}\bigr) \to K_i\eventually{(\correct{i} \to \varphi)}.\qedhere
\]
\end{proof}

\smallskip

\begin{proof}[Proof of Lemma~\ref{lem:earlylocalknow}]
\label{proof:lemmatwentythree}
By Corollary~\ref{cor:beldiahop_knowdiahop}, $I \models B_i \eventually{H_i\truestart[]} \to K_i\eventually{H_i\truestart[]}$.
Applying factivity of knowledge and propositional reasoning to the expanded version of $K_i\eventually{H_i\truestart[]}$ yields 
\[
I \qquad \models \qquad K_i\eventually{\bigl(\correct{i} \to K_i(\correct{i}\to\truestart[])\bigr)} \to K_i\eventually{(\correct{i} \to  \truestart[])}.
\]
Since $\correct{i} \land \lnot \truestart$ is potentially persistent, and its negation is equivalent to $\correct{i} \to \truestart$, we have by Lemma~\ref{lem:resil} that
\[
I \qquad \models \qquad K_i\eventually{(\correct{i} \to  \truestart[])} \to  K_i {(\correct{i} \to  \truestart[])} 
\]
Combining all implications together, we conclude that
$
I \models  B_i \eventually{H_i\truestart[]} \to B_i\truestart$.
\end{proof}

\smallskip

\begin{proof}[Proof of Lemma~\ref{lem:allknow}]
\label{proof:lemmatwentysix}
Let $I=(R,\pi)$. 
Assume $\kstruct{}{r}{t} \models E^{\eventually H}_{}{\truestart[]}$ for some $r \in R$ and time $t\in\bbbt$. 
This means that,  for every agent $j\in \agents$, there is some $t_{j}' \geq t$ such that $\kstruct{}{r}{t_{j}'} \models \H{j}{\truestart[]}$. 
Since $|\agents| \geq 3f+1$, it follows that there exists a group~$G$ of $2f+1$~correct agents such that 
$\kstruct{}{r}{t_{j}'} \models \H{j}{\truestart[]}$ for all $j \in G$.  
Since these agents are correct,\footnote{While we only use the fact that agent~$j \in G$ is correct at~$t'_j$, these agents will necessarily remain correct throughout run~$r$.} we have $\kstruct{}{r}{t_{j}'} \models \B{j}{\truestart[]}$, i.e., $\kstruct{}{r}{t_{j}'} \models \K{j}({\correct{j} \rightarrow \truestart[]})$ for all  $j\in G$. Let $t' \ce \max\{t'_j \mid j \in G\}$. 
We claim that 
\begin{equation}
\label{eq:lifting}
\kstruct{}{r}{t'} \qquad\models\qquad \bigwedge\nolimits_{j \in G} \K{j}({\correct{j} \rightarrow \truestart[]}).
\end{equation}
Indeed, for any agent $j \in G$ consider any alternative run $\overline{r}\in R$ and time $\overline{t'}\in\bbbt$ such that $\overline{r}_j(\overline{t'}) = r_j(t')$. 
Given that $t' \geq t'_j$ and our agents have perfect recall, there must exist some time $\overline{t'_j}\leq \overline{t'}$ such that \mbox{$\overline{r}_j(\overline{t'_j}) = r_j(t'_j)$}. 
Thus, $\kstruct{}{\overline{r}}{\overline{t'_j}} \models \correct{j} \rightarrow \truestart[]$. 
Since the latter formula is stable, it remains true in~$\overline{r}$ by the time~$\overline{t'}$. We  showed that $\kstruct{}{\overline{r}}{\overline{t'}} \models \correct{j} \rightarrow \truestart[]$ whenever $\overline{r}_j(\overline{t'}) = r_j(t')$, meaning $\kstruct{}{r}{t'} \models  \K{j}({\correct{j} \rightarrow \truestart[]})$. This argument applies to every $j \in G$, hence, \eqref{eq:lifting}~is demonstrated for the group~$G$ of $2f+1$~correct agents.

Correctness~(C) applied to~$G$ at time~$t'$ ensures 
$
\kstruct{}{r}{t'} \models \bigwedge_{i \in \agents}\eventually (\correct{i} \to \truefire[i])
$,
and, since $t \leq t'$, we also have 
\[
\kstruct{}{r}{t}\qquad \models\qquad \bigwedge\nolimits_{i \in \agents}\eventually (\correct{i} \to \truefire[i]).
\] 
Given that $r$~and~$t$~were chosen arbitrarily, we have proved
\[
I \qquad\models\qquad E^{\eventually H}_{}{\truestart[]} \to \bigwedge\nolimits_{i \in \agents}\eventually (\correct{i} \to \truefire[i]).
\]
Using~\eqref{eq:nec_UR_hope}, we can conclude 
\[
I \qquad\models\qquad E^{\eventually H}_{}{\truestart[]} \to  \bigwedge\nolimits_{i \in \agents}\eventually{\bigl(\correct{i} \to
\B{i}{(\truestart[] \land {E^{\eventually H}_{}{\truestart[]})}}\bigr)},\] 
i.e., 
\[
I\qquad \models\qquad E^{\eventually H}_{}{\truestart[]} \to \bigwedge\nolimits_{i \in \agents} \eventually H_{i}{(\truestart[] \land {E^{\eventually H}_{}{\truestart[]})}}.
\] 
In other words, we have demonstrated
\[
I \qquad\models \qquad E^{\eventually H}_{}{\truestart[]} \to E^{\eventually H}_{}{(\truestart[] \land E^{\eventually H}_{}{\truestart[]})}.
\]
Using~\eqref{eq:inductionrule} with $\psi =  E^{\eventually H}_{}{\truestart[]}$ and $\varphi = \truestart[]$, we conclude 
\[
I \qquad\models \qquad E^{\eventually H}_{}{\truestart[]}\to  {\CH{}{\truestart[]}}.\qedhere
\]
\end{proof}
\end{document}